\newtheorem{theorem}{\textbf{Theorem}}
\newtheorem{proof}{\textbf{Proof}}
\newtheorem{lemma}{\textbf{Lemma}}
\pgfplotsset{compat=1.17}
\begin{document}
%
\title{Channel-Aware Ordered Successive Relaying with Finite-Blocklength Coding  }
%
%
%
\author{Lingrui~Zhang, Yuxing~Han, Qiong~Wang, and Wei~  Chen,~\IEEEmembership{Senior Member,~IEEE}%
\thanks{Lingrui~Zhang and Wei~Chen are with the Department of Electronic Engineering and Beijing National Research Center for Information Science and Technology, Tsinghua University, Beijing 100084, China, e-mail: \href{mailto:zlr15@tsinghua.org.cn}{zlr15@tsinghua.org.cn }; \href{mailto:wchen@tsinghua.edu.cn}{wchen@tsinghua.edu.cn}.}%
\thanks{Yuxing~Han is with the Shenzhen International Graduate School, Tsinghua University, e-mail: \href{mailto:yuxinghan@sz.tsinghua.edu.cn}{yuxinghan@sz.tsinghua.edu.cn}.}%
\thanks{Qiong~Wang is with the State Grid Beijing Electric Power Company, e-mail: \href{mailto:wangqiong@bj.sgcc.com.cn}{wangqiong@bj.sgcc.com.cn}.}
\thanks{This work has been submitted to the IEEE for possible publication. Copyright may be transferred without notice, after which this version may no longer be accessible.}
}

%
%

\markboth{IEEE TRANSACTIONS ON
VEHICULAR TECHNOLOGY}%
{Submitted paper}
%



\maketitle

\begin{abstract}
Successive relaying can improve the transmission rate by allowing
the source and relays to transmit messages simultaneously, but it may cause severe inter-relay interference (IRI). IRI cancellation schemes have been proposed to mitigate IRI. However, interference cancellation methods have a high risk of error propagation, resulting in a severe transmission rate loss in finite blocklength regimes. Thus, jointly decoding for successive relaying with finite-blocklength coding (FBC) remains a challenge. In this paper, we present an optimized channel-aware ordered successive
relaying protocol with finite-blocklength coding (CAO-SIR-FBC), which can recover the rate loss by carefully adapting the relay transmission order and rate. We analyze the average throughput of the CAO-SIR-FBC method, based on which a closed-form expression in a high signal-to-noise regime (SNR) is presented. Average throughput analysis and simulations 
show that CAO-SIR-FBC outperforms conventional two-timeslot half-duplex relaying in terms of spectral efficiency.
\end{abstract}

\begin{IEEEkeywords}
Successive relaying, finite-blocklength coding,
interference cancellation,
decode-and-forward,
rate adaptation
\end{IEEEkeywords}

%
\IEEEpeerreviewmaketitle

\section{Introduction}

With the rapid development of 5G and 6G wireless networks,
ultra-reliable low-latency communication (URLLC) systems have attracted considerable attention for their potential applications in telesurgery, smart cities and autonomous driving. 3rd Generation Partnership Project (3GPP) 
researchers from academia and industry have begun to look beyond 5G and have started researching 6G systems for multiple use cases, such as artificial intelligence, augmented/virtual reality (AR/VR), and automatic driving \cite{yang20196g}. URLLC systems have attracted considerable attention because they support the stringent requirements of 6G systems. Specifically, URLLC has strict standards requiring $10^{-10}$ error probability and $1$ ms latency time \cite{20196gwirelessnetwork}. To achieve these strict requirements, supporting a low delay violation
probability has attracted significant attention\cite{2019effective}.

As one of three main usage scenarios for 5G systems, URLLC still relies on orthogonal frequency division multiplexing (OFDM) as 4G does \cite{20173GPP170379}. Different from 4G systems, the subcarrier spacing is allowed to be enlarged \cite{20173GPP38.913}. Due to the enlarged subcarrier spacing, the duration of one timeslot is diminished \cite{20183GPP38.300}. Accordingly, the latency of the time-slotted scheduling strategy is also diminished.
However, the submillisecond duration of a timeslot results in a data packet blocklength that is too short \cite{2016toward}.  Shannon's capacity given in \cite{1948mathematical} is no longer achievable with the short packet. Consequently, short-packet transmissions are adopted in URLLC using FBC coding approaches. In a finite blocklength regime, the maximal transmission rate over additive white Gaussian noise (AWGN) channels
is a decrease function of error probability $\varepsilon$ and an  increase function of blocklength $n$. Moreover, transmission rate analysis has been extended to other channel models, such as multiple-antenna fading
channels \cite{2014quasi}, multiple-antenna Rayleigh fading channels \cite{2015short}, block fading channels \cite{2013block}, and
hybrid automatic repeat request (HARQ) \cite{2018low}.

Satisfying the reliable requirement of URLLC with FBC is a challenge because there is a fundamental tradeoff between latency and reliability \cite{2020joint}. More specifically, a short blocklength causes a severe loss in coding gain \cite{2018short}. Because of the larger subcarrier spacing, the number of subcarriers is reduced under the same bandwidth, which decreases the maximum frequency diversity gain in OFDM systems. Hence, there is a need to reuse the spatial domain to ensure reliability in URLLC. This needs to be done both at each network node and by densifying the entire network \cite{20185gdesign}. At the node level, multiple-input multiple-output (MIMO) systems are adopted in URLLC because they achieve very large capacity increases and antenna diversity gains over a harsh link \cite{1999capacity}. Combined with OFDM, a MIMO-OFDM system can achieve higher capacity and reliability \cite{2006MIMOOFDM}. However, the use of MIMO
technology may not be practical if nodes in a network are small, inexpensive, or typically have severe energy constraints \cite{2008unified}. Cooperative communication is a solution to overcome this limitation by which
a high transmission rate is achieved over harsh wireless links without causing high implementation complexity. Cooperative communication was first proposed by Sendonaris
$et \ al.$ for Code Division Multiple Access systems \cite{sendonaris2003user, sendonaris2003user2}. Since then, much research has focused on cooperative communication. The space diversity and diversity-multiplexing tradeoff  of various cooperative relaying schemes were studied by  Laneman $et \ al.$  \cite{laneman2004cooperative}. These studies showed that the amplify-and-forward (AF) and the selective decode-and-forward (DF)
protocols can achieve the maximal diversity gain over fading channels \cite{nabar2004fading}. The maximal diversity gain of these protocols is equal to the number of employed nodes in cooperative transmission. Most of the research mentioned above is focused on two-timeslot relaying protocols. However, these relaying protocols suffer from a half-duplex constraint, which causes the multiplexing gains of two-timeslot relaying protocols to be upper-bounded by $\frac{1}{2}$.
To overcome the multiplexing loss, some cooperative relaying networks with full-duplex relays have been studied in recent studies \cite{li2016multi}. Full-duplex relays can receive and forward signals at the same time. However, the self-interference caused at full-duplex relays still makes it challenging to implement in practice.

In addition to using full-duplex relays, successive relaying is also a
feasible method of recovering the multiplexing loss \cite{yang2007towards}. The basic idea relies on the current transmission of the source and relays. In this protocol, the cost of sending one message is less than two timeslots.
Therefore, the multiplexing gain of successive relaying
is greater than 1/2, while the half-duplex constraint is
satisfied.
Unfortunately, successive relaying may cause severe IRI, resulting in a low transmission rate and poor reliability. For DF-based successive relaying, IRI causes severe decoding errors and error propagation.
Therefore, mitigating IRI becomes a vital issue for DF-based successive relaying.
Relay scheduling or selection was applied to improve the throughput of DF-based two-path successive relaying in \cite{nomikos2012successive}. Hu $et \ al.$ investigated an IRI mitigation method in which two signals transmitted by the source and relay are jointly decoded in a multiple-relay cooperative network\cite{hu2012efficient}. In \cite{wei2020successive}, a DF relaying protocol was proposed based on an analog network interference cancellation method with linear processing
without decoding the signals from the source relay. A channel-aware successive relaying protocol (CAO-SIR) was proposed in \cite{2014CAO}, where the decode-and-cancel protocol
is extended to cancel the IRI completely.

Recently, researchers have investigated relaying under a finite blocklength regime. The authors in \cite{2019finite} derived closed-form expressions for the coding rates of relay communications under the Nakagami-m fading channel in a finite blocklength regime. In \cite{2016finite}, Du $et \ al.$ studied the relaying throughput performance of multi-hop
relaying networks with FBC under quasistatic
Rayleigh channels.
However, a successive relaying protocol in a finite blocklength regime suffers from a severe loss of transmission rate because of the high risk of error propagation caused by the interference cancellation method of the network. Thus, achieving low-latency short-packet communications is still an open problem for successive relaying protocols.

In this paper, we investigate a DF-based successive relaying protocol, also referred to as CAO-SIR-FBC. The IRI cancellation method is based on CAO-SIR \cite{2014CAO}. More specifically, an interfered relay obtains prior knowledge transmitted from the source, which is used to mitigate IRI without high computational complexity. We further analyze the influence of FBC. The error probabilities of every decoding process are not ignored in this context. The error probability at the destination, which is usually constrained to guarantee the high-reliability requirement of the system, is determined using error-propagation theory. This observation motivates us to further optimize the CAO-SIR-FBC scheme. Specifically, the transmission rate in every timeslot is optimized according to the error probabilities in the decoding processes. However, it is not trivial to
derive the closed-form expressions of the transmission rate with an arbitrary SNR. Based on the optimized CAO-SIR-FBC, we present the average throughput of our scheme.
Compared to classic protocols, the average throughput is increased in our scheme. Furthermore, CAO-SIR-FBC outperforms conventional protocols in terms of the multiplexing gain.

The rest of this paper is organized as follows. Section \ref{sec:System Model} presents the system model. The CAO-SIR-FBC method is presented along with its algorithm in Section \ref{sec:CAO-SIR-FBC}. The performance analysis of the CAO-SIR-FBC method is carried out in Section \ref{sec:PERFORMANCE ANALYSIS}. Finally, the numerical results and conclusions are presented in Sections \ref{ sec:NUMERICAL RESULTS} and \ref{sec:Conclusions }, respectively.

\section{System Model}\label{sec:System Model}

\begin{figure}[t]
	\centering
	\includegraphics[width=0.45\textwidth]{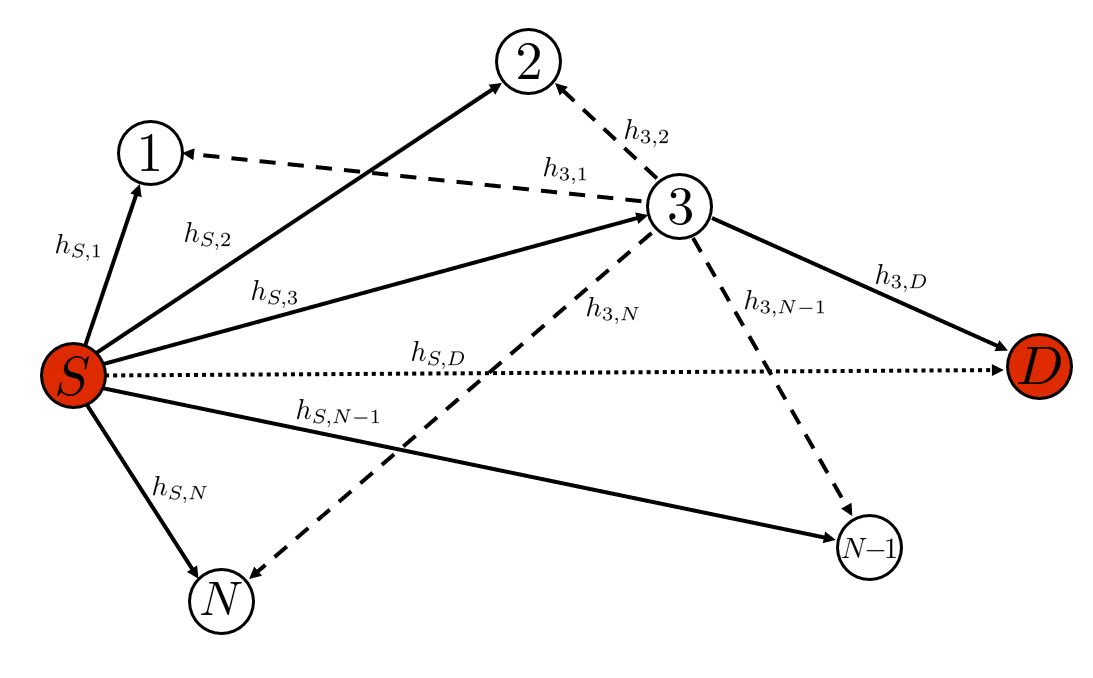}
\caption{System model}
	\label{fig:system model}
\end{figure}

We consider a cooperative communication system in which a source node $S$ communicates to a destination node $D$ with $N$ relays denoted by $\mathcal{N}=\{1, 2, \cdots, N\}$, as shown in Fig. \ref{fig:system model}.
The channel coefficients of the links between nodes $a$ and $b$
are denoted by $h_{a, b}$, where $a\in \{S, 1, \cdots, N\}$ and $b\in \{1, \cdots, N, D\}$.
We use $g_{a,b}$ to denote the channel gain of links between nodes $a$ and $b$, namely, $g_{a,b}=|h_{a,b}|^2$.

We consider a quasi-static or slow-fading channel, in which
the channel state remains constant in each successive relaying period.
In addition, channel estimation with pilot and channel-state-information (CSI) feedback are adopted.
Specifically, the source and relay nodes
send pilots at the beginning of each successive
relaying period.
After receiving the broadcast pilots, a relay can estimate the channel coefficients of the source-relay and relay-relay links.
Similarly, the destination node $D$ can estimate the channel coefficients of the source-destination and relay-destination links.
Consequently, $D$ has the CSI of source-destination and relay-destination channels. A relay has the CSI of links from $S$ and other relays to itself.
Finally, the source-relay, source-destination, and relay-destination channel gains are fed back to the source $S$ through a signaling channel.

The half-duplex constraint is assumed throughout this paper.
A relay node cannot transmit and receive at the same time.
Time-slotted scheduling is assumed in this context.
We use $X_a[k]$ to denote the signal transmitted by node $a$ in the $k$th timeslot.
Node $b$, which does not transmit in the $k$th timeslot because of the half-duplex constraint, receives a signal denoted by
\begin{equation}
	Y_b[k]=\sum_{a\in \mathfrak{A} [k]} h_{a, b}X_a[k]+Z_b[k],
\end{equation}
where $\mathfrak{A} [k]$  denotes the set of transmitting nodes in the $k$th timeslot.
Notation $Z_b[k]$ denotes the additive white Gaussian noise (AWGN) at node $b$, which is subject to a normal distribution with zero mean and a variance of $\sigma ^2$,
i.e., $ Z_b[k] \sim \mathcal{CN} (0,  \sigma^2)$.
The transmission power of node $a$ is denoted by $P_a=E\{| X_a[k]\vert^2 \}$. We assume that each node has the same
transmission power $P$, i.e., $P=E\{| X_a[k]\vert^2 \}$ for all $a\in \{S, 1, 2, \cdots, N\}$.
The transmitter-side SNR of node $a$ is denoted by $\gamma=\frac{P}{\sigma^2}$. Hence, the receiver-side SNR of the link between node $a$ and node $b$ is denoted by $\rho_{a,b}    =g_{a,b}\gamma$.

Because our research is based on a CAO-SIR model, we follow the relay order and scheduling schemes of basic CAO-SIR
\cite{2014CAO}. More specifically, the channel gains of the relay set satisfy
\begin{equation}
	g_{S, 1}\leq g_{S, 2}\leq\cdots\leq g_{S, N}.
\end{equation}
The received signals of relay $m$, $m=1, 2\cdots, N$, are given by
\begin{align}
	Y_m[1]= & h_{S, m}X_S[1]+Z_m[1]\label{eq:Y_m 1}, \\
	\begin{split}
		Y_m[k]=&h_{S, m}X_S[k]+h_{k-1, m}X_{k-1}[k]\label{eq:Y_m k}
		\\&+Z_m[k], \qquad k=2, \cdots, m .
	\end{split}
\end{align}
At the destination node $D$, the received signals
are presented by
\begin{align}
	Y_D[1]=   & h_{S, D}X_S[1]+Z_D[1],                      \\
	\begin{split}
		Y_D[k]=&h_{S, D}X_S[k]+h_{k-1, D}X_{k-1}[k]\label{eq:Y_D k}
		\\&+Z_D[k], \qquad k=2, \cdots, N,
	\end{split}                               \\
	Y_D[N+1]= & h_{N, D}X_N[N+1]+Z_D[N+1]\label{eq:Y_D N+1}.
\end{align}
The IRI cancellation mechanism at relay $m$ is shown by
\begin{equation}\label{Eq:cancel IRI}
	Y_m[i+1]-h_{i, m}X_i[i+1]=h_{S, m}X_S[i+1]+Z_m[i+1].
\end{equation}
The right-hand side of Eq. (\ref{Eq:cancel IRI}) is equivalent to an AWGN channel without IRI. Hence, relay $m$ can decode $W_m$.
Similarly, node $D$ is capable of thoroughly mitigating the interference of $S$ in Eq. (\ref{eq:Y_D k}) as
\begin{equation}\label{Eq: D cancel Interference}
	Y_D[i]-h_{S, D}X_S[i]=h_{i-1, D}X_{i-1}[i]+Z_D[i].
\end{equation}

Finite-blocklength coding is assumed.
Specifically, a packet consisting of $k$ payload bits is typically encoded into a complex symbol whose blocklength is $n$.
The ratio is denoted by $	R=k/n $,
i.e., the number of information bits per complex symbol represents the transmission rate. Let $R^*(\varepsilon)$ denote the largest transmission rate $k/n$ of a coding scheme whose packet error probability does not exceed $\varepsilon$. Additionally, we assume that the blocklength $n$ is constant during
the entire transmission. Therefore, the normal approximate rate given in \cite{polyanskiy2010channel} is presented by
\begin{equation}\label{eq:R usual}
	R^*(\varepsilon)=C-\sqrt{\frac{V}{n}}Q^{-1}(\varepsilon)+\frac{\log n}{2n} .
\end{equation}
Here,
$Q^{-1}(\cdot)$ denotes
the inverse of the Gaussian $Q$ function
(the tail distribution function of the
standard normal distribution).\footnotemark
\footnotetext{
As usual,
$Q(x) = \int_x^\infty \frac{1}{\sqrt{2\pi}} \mathrm{e}^{-t^2/2}dt.$
	\label{eq:Q function}
}
Notation $V$ is
the channel dispersion \cite{polyanskiy2010channel}.

Regarding the case of a real AWGN
channel with the receiver-side SNR $\rho$,
the capacity  is given by \cite{2016toward}

\begin{align}
	C(\rho) & =\frac{1}{2} \log (1+\rho)    .       \label{eq:C}       
\end{align}
The channel dispersion is expressed as
\begin{equation}
    V(\rho)  =\frac{\rho}{2} \frac{(2+\rho)}{(1+\rho)^2}(\log \mathrm{e})^2 . \label{eq:V}
\end{equation}

\section{Successive DF relaying protocol with FBC}\label{sec:CAO-SIR-FBC}

In this section, we propose
the optimization problem of the channel-aware rate adaptation of the CAO-SIR-FBC method. Furthermore, the algorithms to compute the maximal transmission rate are expressed in the second part.

\subsection{Channel-aware rate adaptation}

Because of the FBC, the packet error probability in every timeslot should be considered. We
denote by $\varepsilon_m[k]$  the supremum of the error probability when relay $m$ decodes $W_k$.
When node $D$ decodes message $W_k$, similarly, the supremum of the error probability is denoted by $\varepsilon_D[k]$.
The error probabilities of $W_k$ are dependent.
Based on the definition of the rate,
the rate of message $W_k$ remains constant in the
transmission process where the number of payload bits is constant.
That is,
the rates of transmitting message $W_k$  through
$S$-relay $m$ and relay $k$-$D$ channels are constant
for all $m=k, k+1, \cdots, N$.
Therefore, based on Eq. (\ref{eq:R usual}), the maximal rates of
message $W_k$ in $S$-relay $m$ channels are presented by
\begin{align}\label{eq: R_wk}
		R^*_{k}  
		         =&C_{S, m}
		-\sqrt{\frac{V_{S, m}}{n}}Q^{-1}\left(\varepsilon_m[k]\right)  +\frac{\log n}{2n}                        .   
\end{align}
The maximal rates of message $W_k$ in the relay $k$-$D$ channel are presented by
\begin{align}
		R^*_{k}  =&C_{k, D}
		-\sqrt{\frac{V_{k, D}}{n}}Q^{-1}(\varepsilon_D[k])            
		        +\frac{\log n}{2n} .
\end{align}
Here, $C_{S, m}$ and $C_{k, D}$ denote the capacity of
channels of $S$-relay $m$ and relay $k$-$D$, respectively. We use $V_{S, m}$ and $V_{k,D}$ to denote the capacity of the channels of relay $S$ $m$ and relay $k$-$D$, respectively.

Because all 
messages $W_k$, $k=1, 2\cdots, N$, are reliably transmitted in $N+1$ timeslots,
the average DF-FC-SR rate is presented by
\begin{equation}\label{eq:average R}
	\bar{R}^*=\frac{1}{N+1}\sum\limits _{k=1}^N R^*_{k}.
\end{equation}
Substituting Eq. (\ref{eq:average R}) into Eq. (\ref{eq:R usual}), we have
\begin{equation}
	\begin{split}\label{eq:R average expansion}
		\bar{R}^*=&\frac{1}{N+1}\bigg(\sum\limits_{k=1}^N C_{S, k}
		-\sqrt{\frac{1}{n}}\sum\limits_{k=1}^N \sqrt{V_{S, k}} Q^{-1}(\varepsilon_k[k]) 
		\\
		&+N\cdot \frac{\log n}{2n}\bigg).
	\end{split}\end{equation}

From Eq. (\ref{eq: R_wk}),
all of the error probabilities about $W_k$ are expressed as
\begin{align}
	\begin{split}
		\varepsilon_m[k]=&Q\bigg(
		\sqrt{\frac{n}{V_{S, m}}}\left(C_{S, m}-C_{S, k}\right)\\
		&\qquad+\sqrt{\frac{V_{S, k}}{V_{S, m}}}Q^{-1}\left(\varepsilon_k[k]\right)
		\bigg)\label{eq:epsilon_m[k]},
	\end{split}                                                         \\
	\begin{split}
	\varepsilon_D[k]  =&Q\bigg(
	\sqrt{\frac{n}{V_{k, D}}}(C_{k, D}-C_{S, k})                            \\       
	                 &\qquad+\sqrt{\frac{V_{S, k}}{V_{k, D}}}Q^{-1}(\varepsilon_k[k])
	\bigg)\label{eq:epsilon_D[k]} .
	\end{split}
\end{align}
Next, the reliability of the CAO-SIR-FBC method is given by the following theorem.
\begin{theorem}\label{THEOREM 1}
In the CAO-SIR-FBC method, the reliability, denoted by $\zeta$, is given by
\begin{equation} \label{eq: lower limit of success probability}
    \zeta=\prod_{k=1}^{N}(1-\varepsilon_D[k])
	\prod_{k=1}^{N}\prod_{m=k}^{N}
	(1-\varepsilon_m[k]),
\end{equation}
where $m$ is the relay's order.
\end{theorem}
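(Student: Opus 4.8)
The plan is to write the reliability as the probability that \emph{every} decoding operation performed during a successive-relaying period succeeds, and then to evaluate that probability by a temporally ordered chain-rule argument. Reading off the signal model, the decoding operations are as follows. By \eqref{eq:Y_m 1}--\eqref{eq:Y_m k}, in timeslot~$k$ the relays $m=k,k+1,\dots,N$ are still receiving, and each of them decodes $W_k$ after cancelling the interference of relay $k-1$ via \eqref{Eq:cancel IRI} (which uses the previously decoded $W_1,\dots,W_{k-1}$); these operations contribute the factors $\prod_{k=1}^{N}\prod_{m=k}^{N}(1-\varepsilon_m[k])$. By \eqref{eq:Y_D k}--\eqref{eq:Y_D N+1}, the destination first recovers $W_N$ from timeslot $N+1$ and then, moving backwards, recovers $W_k$ from timeslot $k+1$ after cancelling the concurrent source transmission of $W_{k+1}$ via \eqref{Eq: D cancel Interference}; these contribute the factors $\prod_{k=1}^{N}(1-\varepsilon_D[k])$. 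Since all messages reach $D$ correctly whenever all of these $\binom{N+1}{2}+N$ operations succeed, $\zeta$ is the probability of the intersection of the corresponding success events.

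Next I would order these events in the order in which the decodings actually occur --- timeslot~$1$ relays, then timeslot~$2$ relays, \dots, then the destination's backward pass over timeslots $N+1,N,\dots,2$ --- and apply the chain rule, so that $\zeta$ becomes a product of conditional success probabilities, each conditioned on all of its predecessors having succeeded. The heart of the proof is that each such conditional probability equals the single-link value $1-\varepsilon$. Consider relay $m$ decoding $W_k$: conditioning on all earlier operations succeeding forces relay $m$ to hold the correct $W_1,\dots,W_{k-1}$, so the cancellation \eqref{Eq:cancel IRI} is exact and what remains is the genuine AWGN observation $h_{S,m}X_S[k]+Z_m[k]$; because $Z_m[k]$ is independent of all previously used noise samples and $W_k$ is independent of all previously sent messages, the conditional decoding-error probability is exactly the finite-blocklength error probability of the rate-$R_k^*$ code over that channel, namely $\varepsilon_m[k]$ as in \eqref{eq:epsilon_m[k]}. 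The destination case is identical in spirit: conditioning on the earlier successes guarantees that relay $k$ forwarded the correct $W_k$ and that $D$ already holds the correct $W_{k+1}$, so \eqref{Eq: D cancel Interference} is exact, the residual is $h_{k,D}X_k[k+1]+Z_D[k+1]$, and the conditional error probability is $\varepsilon_D[k]$ as in \eqref{eq:epsilon_D[k]}. Multiplying all the conditional probabilities yields \eqref{eq: lower limit of success probability}.

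I expect the main obstacle to be the error-propagation coupling: the decoding events are genuinely dependent, since a decoding failure at a relay corrupts the interference it subtracts in every later timeslot (and likewise for the destination in its backward pass), so one cannot simply multiply marginal success probabilities. The device that removes this difficulty is the temporal ordering together with conditioning on \emph{all} predecessors rather than only the ``relevant'' ones: once every earlier decoding is correct, every cancellation the current decoder performs is exact, so the current decoder really does see a clean AWGN channel excited by a fresh, independent noise sample, and the dependence collapses. It is also worth noting that it is the mutual independence of the noise samples $\{Z_b[k]\}$ and of the messages $\{W_k\}$ that makes each conditional error probability reduce \emph{exactly} to the per-link quantity in \eqref{eq:epsilon_m[k]}--\eqref{eq:epsilon_D[k]}, rather than merely being upper bounded by it.
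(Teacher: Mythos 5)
Your proposal follows essentially the same route as the paper's Appendix~A: order the decoding operations temporally (forward successive decoding at the relays, then reverse successive decoding at the destination), decompose the reliability by the chain rule, and identify each conditional success probability with the corresponding per-link quantity $1-\varepsilon_m[k]$ or $1-\varepsilon_D[k]$. The only substantive difference is that the paper, having defined each $\varepsilon$ as a \emph{supremum} of an error probability, only derives $\zeta$ as a lower bound on $\mathrm{Pr}\{\hat{W}^D_{1}=W_{1}\}$ and then names that bound the reliability, whereas you argue exact equality of each conditional probability via the exactness of the cancellations and the independence of the noise samples $Z_b[k]$ --- a slightly stronger conclusion than what the paper's own argument establishes.
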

\begin{proof}
    See Appendix \ref{sec:appendices 4}.
\end{proof}

In a communication system, the error probability of the decoding
process at node $D$ is a constraint of the system.
We suppose that the supremum of the error probability of $D$
is denoted by $\varepsilon_d$. Hence, based on the definition
of $\varepsilon_d$, we have
\begin{equation}\label{ineq: constraint}
	\zeta\geq 1-\varepsilon_d.
\end{equation}

Because only the second item on the right-hand side in Eq. (\ref{eq:R average expansion}) is denoted by $\varepsilon$, we formulate the optimization problem presented by
\begin{equation}\label{opt: original}
\begin{split}
		\min \quad &\sum\limits_{k=1}^N \sqrt{V_{S, k}}
		Q^{-1}(\varepsilon_k[k])\\
		s.t. \quad&\begin{cases}
			\prod\limits_{k=1}^{N}(1-\varepsilon_D[k])
			\prod\limits_{k=1}^{N}\prod\limits_{m=k}^{N}
			(1-\varepsilon_m[k]) \geq 1-\varepsilon_d\\
			\begin{aligned}
				\begin{split}
		\varepsilon_m[k]=&Q\bigg(
		\sqrt{\frac{n}{V_{S, m}}}\left(C_{S, m}-C_{S, k}\right)\\
		&\qquad+\sqrt{\frac{V_{S, k}}{V_{S, m}}}Q^{-1}\left(\varepsilon_k[k]\right)
		\bigg),
	\end{split}                                                         \\
	\begin{split}
	\varepsilon_D[k]  =&Q\bigg(
	\sqrt{\frac{n}{V_{k, D}}}(C_{k, D}-C_{S, k})                            \\       
	                 &\qquad+\sqrt{\frac{V_{S, k}}{V_{k, D}}}Q^{-1}(\varepsilon_k[k])
	\bigg).
	\end{split}
			\end{aligned}
		\end{cases}
	\end{split}\end{equation}

\subsection{Algorithms to compute the maximal transmission rate}
Because the optimization problem (\ref{opt: original}) cannot be solved directly, we need to transform it into an equivalent simpler problem.
We reordered $\varepsilon$
according to their magnitudes.
The re-ordered error probabilities are given by
\begin{equation}
	\varepsilon_1\leq \varepsilon_2\leq \cdots \leq \varepsilon_M.
\end{equation}
Here, $M$, which is the number of $\varepsilon$, is given by
\begin{equation}
	\begin{split}
		M=\frac{N(N+3)}{2}.
	\end{split}
\end{equation}
With the re-ordered $\varepsilon$, Eq. (\ref{eq: lower limit of success probability}) is expressed as
\begin{equation}
	\zeta= \prod_{i=1}^M(1-\varepsilon_i).
\end{equation}
Based on the research on linear approximation conditions for nonlinear models
\cite{1980Relative},
the constraint is a linear approximation
at the stable operating point of the CAO-SIR-FBC method.
In other words, the constraint is simplified to
\begin{equation}
	1-\sum_{i=1}^M\varepsilon_i\geq 1-\varepsilon_d '\ ,
\end{equation}
where $\varepsilon_d '$ is a parameter in the neighborhood of
$\varepsilon_d$.
As a result, the inequality constraints of the optimization problem (\ref{opt: original})
is expressed as
\begin{equation}\label{opt: varepsilon_d}
			\sum\limits_{k=1}^N\varepsilon_D[k]+\sum\limits_{k=1}^N\sum\limits_{m=k}^{N}\varepsilon_m[k]\leq \varepsilon_d ' ,
\end{equation}
where the range of $\varepsilon_d '$ is given by the following lemma.
\begin{lemma} \label{LEMMA 1}
The range of $\varepsilon_d '$ is given by
\begin{equation}
\begin{split}
    \varepsilon_{\min}\leq& \varepsilon_d '\leq \varepsilon_{\max}
		,
\end{split}
\end{equation}
where $\varepsilon_{\min}=\varepsilon_d$ and $\varepsilon_{\max}=\frac{M-\sqrt{M^2-2M(M-1)\varepsilon_d}}{M-1}$.
\end{lemma}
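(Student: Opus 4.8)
The plan is to read $\varepsilon_d'$ as the ``error budget'' of the linearized constraint $\sum_{i=1}^{M}\varepsilon_i\le\varepsilon_d'$ and to pin down the two endpoints of its admissible window by comparing this budget with the exact reliability requirement $\zeta=\prod_{i=1}^{M}(1-\varepsilon_i)\ge 1-\varepsilon_d$ from Theorem~\ref{THEOREM 1}. The two tools I would use are the first two Bonferroni (Weierstrass-type) inequalities, which sandwich the exact product between a first- and a second-order polynomial in the $\varepsilon_i$, and the power-mean inequality to pass from $\sum_{i<j}\varepsilon_i\varepsilon_j$ to $(\sum_i\varepsilon_i)^2$.

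First I would fix the lower endpoint $\varepsilon_{\min}=\varepsilon_d$. A short induction on $M$ (the step being $\prod_{i=1}^{M+1}(1-\varepsilon_i)=(1-\varepsilon_{M+1})\prod_{i=1}^{M}(1-\varepsilon_i)\ge(1-\varepsilon_{M+1})\bigl(1-\sum_{i=1}^{M}\varepsilon_i\bigr)\ge 1-\sum_{i=1}^{M+1}\varepsilon_i$, using $\varepsilon_{M+1}\sum_{i=1}^{M}\varepsilon_i\ge 0$) gives $\zeta\ge 1-\sum_{i=1}^{M}\varepsilon_i$. Hence $\sum_{i=1}^{M}\varepsilon_i\le\varepsilon_d$ already forces $\zeta\ge 1-\varepsilon_d$, so $\varepsilon_d'=\varepsilon_d$ makes the linearized constraint a sufficient condition for reliability; taking $\varepsilon_d'<\varepsilon_d$ would only shrink the feasible set and lower the attainable rate with no benefit, so $\varepsilon_d$ is the smallest sensible value.

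Next I would fix the upper endpoint $\varepsilon_{\max}$. Starting from the complementary Bonferroni bound $\prod_{i=1}^{M}(1-\varepsilon_i)\le 1-\sum_{i=1}^{M}\varepsilon_i+\sum_{1\le i<j\le M}\varepsilon_i\varepsilon_j$ (again a short induction, using both $P_M\le 1-S_1+S_2$ and $P_M\ge 1-S_1$ at the inductive step, with $P_M=\prod_{i=1}^{M}(1-\varepsilon_i)$, $S_1=\sum_i\varepsilon_i$, $S_2=\sum_{i<j}\varepsilon_i\varepsilon_j$), I would bound $\sum_{i<j}\varepsilon_i\varepsilon_j=\tfrac12\bigl[(\sum_i\varepsilon_i)^2-\sum_i\varepsilon_i^2\bigr]\le\tfrac{M-1}{2M}(\sum_i\varepsilon_i)^2$ via $\sum_i\varepsilon_i^2\ge\tfrac1M(\sum_i\varepsilon_i)^2$. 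Writing $s=\sum_{i=1}^{M}\varepsilon_i$, the reliability constraint $1-\varepsilon_d\le\zeta$ then yields $1-\varepsilon_d\le 1-s+\tfrac{M-1}{2M}s^2$, i.e. $\tfrac{M-1}{2M}s^2-s+\varepsilon_d\ge 0$. Solving this quadratic — the discriminant $1-\tfrac{2(M-1)}{M}\varepsilon_d$ is positive since $\varepsilon_d$ is tiny in the URLLC regime — and discarding the large root $s_+>\tfrac{M}{M-1}>1$, which is excluded because $\varepsilon_i\le\varepsilon_d$ (each factor of $\zeta$ is $\ge\zeta\ge 1-\varepsilon_d$) gives $s\le M\varepsilon_d<1$, leaves $s\le\frac{M-\sqrt{M^2-2M(M-1)\varepsilon_d}}{M-1}=\varepsilon_{\max}$. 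Thus the exact reliability requirement implies $\sum_i\varepsilon_i\le\varepsilon_{\max}$, so enlarging $\varepsilon_d'$ past $\varepsilon_{\max}$ would let the linearized constraint admit $\{\varepsilon_i\}$ that are incompatible with $\zeta\ge 1-\varepsilon_d$; hence $\varepsilon_{\max}$ is the largest admissible value. (A one-line check that squaring $M-(M-1)\varepsilon_d\ge\sqrt{M^2-2M(M-1)\varepsilon_d}$ reduces to $(M-1)^2\varepsilon_d^2\ge 0$ confirms $\varepsilon_{\max}\ge\varepsilon_d$, so the window is nonempty.)

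I expect the main obstacle to be the clean derivation and justification of the second-order Bonferroni upper bound together with the bookkeeping around the quadratic — in particular arguing that the operating regime always selects the smaller root (equivalently, that the per-link error probabilities forced by $\zeta\ge 1-\varepsilon_d$ keep $s<\tfrac{M}{M-1}$), and checking that this whole construction is consistent with the ``linear approximation at the stable operating point'' of \cite{1980Relative} that motivated replacing $\prod_i(1-\varepsilon_i)$ by $1-\sum_i\varepsilon_i$ in the first place. The remaining manipulations are routine algebra.
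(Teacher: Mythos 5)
Your proposal is correct and follows essentially the same route as the paper's proof: both sandwich $\zeta=\prod_i(1-\varepsilon_i)$ between the first- and second-order Bonferroni truncations, bound the cross term by $\frac{M-1}{2M}\bigl(\sum_i\varepsilon_i\bigr)^2$, and obtain $\varepsilon_{\max}$ as the smaller root of the resulting quadratic, with $\varepsilon_{\min}=\varepsilon_d$ coming from the linear lower bound. Your explicit justification for discarding the larger root and the check that $\varepsilon_{\max}\ge\varepsilon_d$ are welcome additions that the paper leaves implicit.
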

\begin{proof}
See Appendix \ref{sec:appendices 1}.
\end{proof}
Here, $\varepsilon_{\min}$ and $\varepsilon_{\max}$ denote the infimum and supremum of $\varepsilon_d'$, respectively.

Next, we define a kind of parameter $x$ that satisfies
\begin{equation}\label{eq: define x}
	x := Q^{-1}(\varepsilon).
\end{equation}
By substituting Eq. (\ref{eq: define x}) into
Eqs. (\ref{eq:epsilon_D[k]}) and (\ref{eq:epsilon_m[k]}), we have
a simple linear relation of parameter $x$ .
Consequently, after substituting parameter $x$ for $\varepsilon$, the optimization problem (\ref{opt: varepsilon_d}) is simplified as
\begin{equation}\label{opt: constrained x}
	\begin{split}
		\min \quad &\sum\limits_{k=1}^N \sqrt{V_{S, k}}
		\quad x_k[k]\\
		s.t. \quad &\begin{cases}
			\sum\limits_{k=1}^N\left(\sum\limits_{m=k}^{N}Q(x_m[k])+Q(x_D[k])\right)\leq \varepsilon_d ' \\
			x_m[k]=\sqrt{\frac{n}{V_{S, m}}}(C_{S, m}-C_{S, k})
			+\sqrt{\frac{V_{S, m}}{V_{S, k}}}x_k[k]                                                    \\
			x_D[k]=\sqrt{\frac{n}{V_{k, D}}}(C_{k, D}-C_{S, k})+
			\sqrt{\frac{V_{S, k}}{V_{k, D}}}x_k[k]                                                     \\
			x_k[k]>0,  k=1,2,\cdots,N,
		\end{cases}
	\end{split}
\end{equation}
where the constraint of $x_k[k]$ is attributed to $\varepsilon_k[k]<\varepsilon_d'<0.5, k=1,2,\cdots,N$.

The optimization problem (\ref{opt: constrained x}) is actually a convex optimization. However, the gradient of $Q(x)$ at the operating point satisfies $\nabla Q(x) \ll 1$. Hence, the convergence is very slow. It is complicated to obtain the solution through
the traditional interior-point method \cite{2004Convex} or CVX toolkit \cite{2008CVX}.
A further simplification is necessary.

Based on the optimization problem (\ref{opt: constrained x}), we propose the following lemma and substitute the
inequality constraints with an equality constraint.
\begin{lemma}
	\label{LEMMA 2}
The solution to the optimization problem (\ref{opt: constrained x}) satisfies
\begin{equation}
		\sum\limits_{k=1}^N\left(\sum\limits_{m=k}^{N}Q(x_m[k])+Q(x_D[k])\right)= \varepsilon_d ' \ .
	\end{equation}
\end{lemma}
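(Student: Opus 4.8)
The plan is to prove the lemma by a standard exchange (perturbation) argument that uses only monotonicity, so the slow-gradient behaviour of $Q$ plays no role here. First I would record two structural facts about problem (\ref{opt: constrained x}). Since $Q(\cdot)$ is continuous and strictly decreasing, and the equality constraints express each $x_m[k]$ and $x_D[k]$ as an affine function of the free variable $x_k[k]$ with a strictly positive slope (a square root ratio of channel dispersions), every summand $Q(x_m[k])$ and $Q(x_D[k])$ is a continuous, strictly decreasing function of $x_k[k]$. Hence the left-hand side of the inequality constraint, viewed as a function $\Phi(x_1[1],\dots,x_N[N])$ of the $N$ free variables, is continuous and strictly decreasing in every coordinate. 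On the other hand the objective $F(x_1[1],\dots,x_N[N])=\sum_{k=1}^{N}\sqrt{V_{S,k}}\,x_k[k]$ is strictly increasing in every coordinate, because $\sqrt{V_{S,k}}>0$.

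Next I would carry out the exchange step. Let $x^{\star}=(x_k^{\star}[k])_{k=1}^{N}$ be a solution of (\ref{opt: constrained x}); being feasible it satisfies $x_k^{\star}[k]>0$ for all $k$ and $\Phi(x^{\star})\le\varepsilon_d'$. Suppose, for contradiction, that $\Phi(x^{\star})<\varepsilon_d'$. Fix any index $j\in\{1,\dots,N\}$. By continuity of $\Phi$, there is $\delta\in(0,x_j^{\star}[j])$ so small that the point obtained from $x^{\star}$ by replacing $x_j^{\star}[j]$ with $x_j^{\star}[j]-\delta$ still satisfies $\Phi<\varepsilon_d'$; the equality constraints are merely recomputed from the affine relations, and $x_j^{\star}[j]-\delta>0$, so the perturbed point is feasible. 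Its objective value equals $F(x^{\star})-\sqrt{V_{S,j}}\,\delta<F(x^{\star})$, contradicting the optimality of $x^{\star}$. Therefore $\Phi(x^{\star})=\varepsilon_d'$, which is exactly the asserted equality.

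As an independent check I would note the KKT reading of the same phenomenon: (\ref{opt: constrained x}) is convex with differentiable data and a Slater point, so at the optimum $\nabla F+\lambda\,\nabla g=0$ for some $\lambda\ge0$, where $g:=\Phi-\varepsilon_d'$ and the bound constraints $x_k[k]>0$ are inactive; since every component of $\nabla F$ is positive while every component of $\nabla g$ is negative, stationarity forces $\lambda>0$, and then complementary slackness gives $g(x^{\star})=0$.

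The one point that needs care is the open constraint $x_k[k]>0$: the downward perturbation must neither push that coordinate to the boundary nor create a newly violated inequality through the coupled variables $x_m[k]$, $x_D[k]$. Both are handled by choosing $\delta$ strictly less than $x_j^{\star}[j]$ and small enough for the continuity estimate on $\Phi$, together with the observation that lowering $x_j^{\star}[j]$ only raises the $Q$-terms that depend on it, so no other inequality is activated. If one wants existence of a minimizer spelled out as well, note that the relevant part of the feasible set becomes compact once the analysis is restricted to the constraint surface --- each $Q(x_k[k])\to\tfrac12$ as $x_k[k]\to0^{+}$, which is incompatible with the small budget $\varepsilon_d'$ --- so the infimum is attained and the above argument applies at the attaining point.
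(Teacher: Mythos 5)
Your proof is correct and is essentially the paper's own argument: the paper likewise proceeds by contradiction, perturbing the last free variable $x_N[N]$ downward by a small $\Delta x$ (using continuity of $Q$ to preserve feasibility) to obtain a strictly smaller objective, and then invokes the intermediate value theorem to confirm the equality level $\varepsilon_d'$ is attainable. Your version is marginally more careful (arbitrary coordinate, explicit handling of the open bound $x_k[k]>0$, plus a KKT cross-check), but the core mechanism is the same.
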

\begin{proof}
See Appendix \ref{sec:appendices 2}.
\end{proof}
We then transform the optimization problem (\ref{opt: constrained x}) into an abstract optimization problem to analyze its properties.
The coefficient matrices and coefficient vector of the optimization problem (\ref{opt: constrained x}) are denoted by $(a_{ij})_{N\times (N+1)}, (b_{ij})_{N\times (N+1)}$ and $(c_i)_{N\times 1}$,
respectively, in which the matrix element is denoted by
\begin{equation*}\label{pramater substitute}
	\begin{split}
		&a_{ij}=\sqrt{\frac{V_{S, i}}{V_{S, j}}}, \\
		&b_{ij}=\sqrt{\frac{n}{V_{S, j}}}(C_{S, j}-C_{S, i}), \qquad  1\leq j\leq N, \\
		&a_{i(N+1)}=\sqrt{\frac{V_{S, i}}{V_{i, D}}}\ , \\
		&b_{i(N+1)}=\sqrt{\frac{n}{V_{i, D}}}(C_{i, D}-C_{S, i}), \\
		&c_i=\sqrt{V_{S, i}}.
	\end{split}
\end{equation*}
Here, we regard the destination node $D$ as an $N+1$ relay because of the similarity between its rate equation and that of other relays. Furthermore, an optimization variable $\boldsymbol{x}=(x_1, x_2, \cdots, x_N)$ is adapted, in which the element $x_k$ denotes $x_k[k]$ for $k=1, 2, \cdots, N$.

Substituting coefficient matrices and coefficient vectors into the optimization problem (\ref{opt: constrained x}), we obtain the optimization problem expressed as
\begin{equation}\label{eq:simplest  constraint}
	\begin{split}
		\min \quad &f(\boldsymbol{x})=\sum\limits_{i=1}^N c_{i}x_i\\
		s.t.\quad &\begin{cases}
			 & h(\boldsymbol{x})=\sum\limits_{i=1}^N \sum\limits_{j=i}^{N+1}
			Q(a_{i, j}x_{i}+b_{i, j})-\varepsilon_d '=0                      \\
			 & x_{i}\geq 0,  i=1,2,\cdots,N .
		\end{cases}
	\end{split}
\end{equation}

It is worth noting that the optimization problem (\ref{eq:simplest  constraint}) is not a convex optimization because the equality constraint function $h(\boldsymbol{x})$ is not affine.
As a result, we propose Lemma \ref{LEMMA 3} to exchange the equality constraint function and objective functions.
\begin{lemma} \label{LEMMA 3}
Let the solution of the optimization problem (\ref{eq:simplest  constraint}) be denoted by $\boldsymbol{x_0}$. The value of the objective function is presented by $v_0$.
Then, the optimization problem with the same solution $\boldsymbol{x_0}$ of the optimization problem (\ref{eq:simplest  constraint}) is expressed as
\begin{equation}\label{eq:final constraint}
		\begin{split}
			\min \quad &f(\boldsymbol{x})=\sum\limits_{i=1}^N \sum\limits_{j=i}^{N+1}
			Q(a_{i, j}x_{i}+b_{i, j})\\
			s.t.\quad &\begin{cases}
				 & h(\boldsymbol{x})=\sum\limits_{i=1}^N c_{i}x_i-v_0=0 \\
				 & x_{i}\geq 0, i=1,2,\cdots,N.
			\end{cases}
		\end{split}
	\end{equation}
\end{lemma}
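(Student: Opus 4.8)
The plan is to show that problems (\ref{eq:simplest  constraint}) and (\ref{eq:final constraint}) have exactly the same set of minimizers, so that in particular $\boldsymbol{x_0}$ solves (\ref{eq:final constraint}). Everything will rest on two monotonicity facts. Write $f(\boldsymbol{x})=\sum_{i=1}^{N}c_i x_i$ and $\tilde{f}(\boldsymbol{x})=\sum_{i=1}^{N}\sum_{j=i}^{N+1}Q(a_{i,j}x_i+b_{i,j})$. Since $c_i=\sqrt{V_{S,i}}>0$, the map $f$ is continuous and strictly increasing in each coordinate; since every $a_{i,j}>0$ (being $\sqrt{V_{S,i}/V_{S,j}}$ or $\sqrt{V_{S,i}/V_{i,D}}$) and $Q(\cdot)$ is continuous and strictly decreasing, the map $\tilde{f}$ is continuous and strictly decreasing in each coordinate on $\mathbb{R}^{N}_{\ge 0}$. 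Over the nonnegative orthant $f$ and $\tilde{f}$ are therefore in tension coordinatewise, and I would use this tension to pass between the two problems.

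First I would dispatch two easy observations. (i) $\boldsymbol{x_0}$ is feasible for (\ref{eq:final constraint}): it is nonnegative, $\sum_i c_i (x_0)_i=f(\boldsymbol{x_0})=v_0$ so the constraint $\sum_i c_i x_i-v_0=0$ holds, and since $\boldsymbol{x_0}$ is feasible for (\ref{eq:simplest  constraint}), whose equality constraint reads $\tilde{f}(\boldsymbol{x})-\varepsilon_d'=0$, the objective of (\ref{eq:final constraint}) at $\boldsymbol{x_0}$ equals $\tilde{f}(\boldsymbol{x_0})=\varepsilon_d'$. (ii) $\tilde{f}(\boldsymbol{0})\ge\varepsilon_d'$: being separable and strictly decreasing in each coordinate, $\tilde{f}$ attains its maximum over $\mathbb{R}^{N}_{\ge 0}$ at the origin and (by continuity, each $Q$ term tending to $0$) has range $(0,\tilde{f}(\boldsymbol{0})]$ there, so feasibility of (\ref{eq:simplest  constraint}), which means the level $\varepsilon_d'$ is attained, forces $\varepsilon_d'\le\tilde{f}(\boldsymbol{0})$.

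The crux, which I expect to be the main obstacle, is to show that no feasible point of (\ref{eq:final constraint}) beats $\boldsymbol{x_0}$, i.e.\ that $\tilde{f}(\boldsymbol{x})\ge\varepsilon_d'$ whenever $\boldsymbol{x}\ge\boldsymbol{0}$ and $f(\boldsymbol{x})=v_0$. I would argue by contradiction: suppose such an $\boldsymbol{x_1}$ has $\tilde{f}(\boldsymbol{x_1})<\varepsilon_d'$. If $v_0=0$ then $c_i>0$ forces $\boldsymbol{x_1}=\boldsymbol{0}$, contradicting $\tilde{f}(\boldsymbol{x_1})<\varepsilon_d'\le\tilde{f}(\boldsymbol{0})$; so assume $v_0>0$. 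Now retract $\boldsymbol{x_1}$ toward the origin along $\boldsymbol{x}(t)=(1-t)\boldsymbol{x_1}$, $t\in[0,1]$. This segment stays in $\mathbb{R}^{N}_{\ge 0}$, $f(\boldsymbol{x}(t))=(1-t)v_0$ strictly decreases, and each coordinate of $\boldsymbol{x}(t)$ is nonincreasing in $t$, so $t\mapsto\tilde{f}(\boldsymbol{x}(t))$ is continuous and nondecreasing, running from $\tilde{f}(\boldsymbol{x_1})<\varepsilon_d'$ at $t=0$ to $\tilde{f}(\boldsymbol{0})\ge\varepsilon_d'$ at $t=1$. By the intermediate value theorem there is a $t^{*}\in(0,1]$ with $\tilde{f}(\boldsymbol{x}(t^{*}))=\varepsilon_d'$, and then $\boldsymbol{x}(t^{*})$ is feasible for (\ref{eq:simplest  constraint}) with objective $(1-t^{*})v_0<v_0=f(\boldsymbol{x_0})$, contradicting the optimality of $\boldsymbol{x_0}$. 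This proves the claim, and together with (i) it shows $\boldsymbol{x_0}$ minimizes (\ref{eq:final constraint}) with optimal value $\varepsilon_d'$.

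For the reverse inclusion I would take any minimizer $\boldsymbol{x}^{*}$ of (\ref{eq:final constraint}); since $\boldsymbol{x_0}$ is feasible there, $\tilde{f}(\boldsymbol{x}^{*})\le\tilde{f}(\boldsymbol{x_0})=\varepsilon_d'$, while the claim just proved gives $\tilde{f}(\boldsymbol{x}^{*})\ge\varepsilon_d'$; hence $\tilde{f}(\boldsymbol{x}^{*})=\varepsilon_d'$, so $\boldsymbol{x}^{*}$ is feasible for (\ref{eq:simplest  constraint}), and $f(\boldsymbol{x}^{*})=v_0$ is exactly its optimal value, so $\boldsymbol{x}^{*}$ solves (\ref{eq:simplest  constraint}). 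Thus the two problems share the same solution set. (For completeness one checks that the feasible set of (\ref{eq:final constraint}) is compact, since it lies in $\{0\le x_i\le v_0/c_i\}$, so a minimizer does exist.) The genuine work is the contradiction step, where both strict monotonicities and continuity are indispensable via the ``retraction to the origin'' trajectory and the intermediate value theorem; the edge cases $v_0=0$ and $t^{*}=1$ and the bound $\tilde{f}(\boldsymbol{0})\ge\varepsilon_d'$ are routine but must be verified.
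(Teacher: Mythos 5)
Your proof is correct, but it takes a genuinely different route from the paper's. The paper argues via the Karush--Kuhn--Tucker conditions: writing the stationarity condition $\nabla f_0(\boldsymbol{x_0})+\lambda\nabla h_0(\boldsymbol{x_0})=0$ for problem (\ref{eq:simplest  constraint}), it observes that swapping objective and constraint merely replaces the multiplier $\lambda$ by $1/\lambda$, so $\boldsymbol{x_0}$ also satisfies the KKT system of problem (\ref{eq:final constraint}); it then asserts that the KKT system consists of $N+1$ independent equations and therefore has a unique solution. That argument is short but leans on two fragile points: KKT conditions are only necessary (not sufficient) for the non-convex problem (\ref{eq:simplest  constraint}), and the uniqueness claim for a nonlinear system of $N+1$ equations is not justified. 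Your argument avoids KKT entirely and instead exploits only the strict coordinatewise monotonicity of $f$ (increasing, since $c_i>0$) and of $\tilde f$ (decreasing, since $a_{i,j}>0$ and $Q$ is decreasing), together with continuity and the intermediate value theorem along the retraction $(1-t)\boldsymbol{x_1}$. This yields the stronger conclusion that the two problems have identical solution sets, handles the degenerate case $v_0=0$, and notes existence of a minimizer via compactness of the feasible set of (\ref{eq:final constraint}) --- all of which the paper's proof leaves implicit. The trade-off is length: the paper's KKT duality observation is a one-line structural insight (objective and constraint gradients are parallel at the optimum, so they can be exchanged), whereas your elementary argument requires the explicit feasibility and level-set bookkeeping in steps (i) and (ii). Both establish the lemma; yours is the more self-contained and rigorous of the two.
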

\begin{proof}
See Appendix \ref{sec:appendix 3}.
\end{proof}
In the optimization problem (\ref{eq:final constraint}), the equality constraint function is affine, and the objective function
is convex because
\begin{equation*}
	\nabla^2 f(\boldsymbol{x})\geq 0,
\end{equation*}
for all $x_i>0$.
Consequently, problem (\ref{eq:final constraint}) is a convex optimization problem. The minimal $f(\boldsymbol{x})$ is obtained with a specific $v_0$. Additionally, the range of $v_0$ can be obtained by the following lemma.
\begin{lemma}\label{LEMMA 4}
    The objective function value in the solution of the optimization problem (\ref{eq:simplest  constraint}) satisfies
    \begin{align}
    &v_1 \leq v_0 \leq v_2,
\end{align}
where $v_1=Q^{-1}\left(\varepsilon_d'\right)\sum\limits_{i=1}^N c_i$ and $v_2=x_1\sum\limits_{i=1}^N c_i$.
Here, the parameter $ x_1$ is the solution to the equation expressed as
\begin{equation}
	\sum\limits_{i=1}^N \sum\limits_{j=i}^{N+1}
	Q(a_{i, j}x+b_{i, j})-\varepsilon_d '=0.
\end{equation}
\begin{proof}
    The vector $\boldsymbol{x_1}$ is equal to $x_1\cdot (1)_{1\times N}^T$ and satisfies the constraint of problem (\ref{eq:simplest  constraint}), so we have \begin{equation}
    v_0\leq v_2.
\end{equation}
Furthermore, because every error probability satisfies $\varepsilon_m[k]\leq\varepsilon_d'$, the solution $v_0$ satisfies
\begin{equation}
    v_0 \geq v_1.
\end{equation}
\end{proof}
\end{lemma}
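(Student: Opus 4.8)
The plan is to prove the two bounds separately, each by a short elementary argument that exploits the structure of problem~(\ref{eq:simplest  constraint}). For the upper bound $v_0\le v_2$ I would exhibit an explicit feasible point whose objective value equals $v_2$, namely $\boldsymbol{x_1}=x_1\cdot(1,1,\dots,1)^T$, with $x_1$ the root of $\sum_{i=1}^N\sum_{j=i}^{N+1}Q(a_{i,j}x+b_{i,j})-\varepsilon_d'=0$. First I would check that this root exists, is unique, and is strictly positive: the left-hand side is continuous and strictly decreasing in $x$ because every coefficient $a_{i,j}=\sqrt{V_{S,i}/V_{S,j}}$ (or $\sqrt{V_{S,i}/V_{i,D}}$ for $j=N+1$) is positive; its value at $x=0$ is at least $N/2\ge 1/2>\varepsilon_d'$ since each of the $N$ diagonal terms equals $Q(0)=1/2$ (here $b_{i,i}=0$); and it tends to $0$ as $x\to\infty$. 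Hence $\boldsymbol{x_1}$ satisfies $h(\boldsymbol{x_1})=0$ and $x_1>0$, so it is feasible for~(\ref{eq:simplest  constraint}), and therefore $v_0\le f(\boldsymbol{x_1})=x_1\sum_{i=1}^N c_i=v_2$.

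For the lower bound $v_0\ge v_1$, let $\boldsymbol{x}^\star$ denote an optimizer of~(\ref{eq:simplest  constraint}). Its equality constraint expresses $\varepsilon_d'$ as a sum of $M=N(N+3)/2$ non-negative terms $Q(a_{i,j}x_i^\star+b_{i,j})$, so each such term is at most $\varepsilon_d'$; specializing to the diagonal terms, for which $a_{i,i}=1$ and $b_{i,i}=0$, yields $Q(x_i^\star)\le\varepsilon_d'$ for every $i$. Because $Q^{-1}$ is strictly decreasing, this gives $x_i^\star\ge Q^{-1}(\varepsilon_d')$; multiplying by $c_i=\sqrt{V_{S,i}}\ge 0$ and summing over $i$ then gives $v_0=\sum_{i=1}^N c_i x_i^\star\ge Q^{-1}(\varepsilon_d')\sum_{i=1}^N c_i=v_1$.

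The only non-routine step is the feasibility verification of $\boldsymbol{x_1}$: once it is known that the scalar equation defining $x_1$ has a unique positive root, both inequalities follow from nothing more than non-negativity of the $Q$-terms, non-negativity of the $c_i$, and monotonicity of $Q^{-1}$. Establishing that root cleanly is where I would be most careful, as it relies on the strict monotonicity of $x\mapsto\sum_{i,j}Q(a_{i,j}x+b_{i,j})$ and on the boundary behaviour at $x=0$ and $x\to\infty$ sketched above; it is also the only place the hypothesis $\varepsilon_d'<1/2$ carried over from problem~(\ref{opt: constrained x}) is actually needed.
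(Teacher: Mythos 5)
Your proof is correct and follows essentially the same route as the paper's: the upper bound comes from the feasibility of the constant vector $\boldsymbol{x_1}=x_1\cdot(1)_{1\times N}^T$, and the lower bound from the fact that each individual error probability (in particular the diagonal term $Q(x_i)$, since $a_{ii}=1$ and $b_{ii}=0$) cannot exceed $\varepsilon_d'$. Your added verification that the scalar equation defining $x_1$ has a unique positive root is a detail the paper leaves implicit, but it does not change the argument.
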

\begin{figure}[t]
	\centering
	\includegraphics[width=0.45\textwidth]{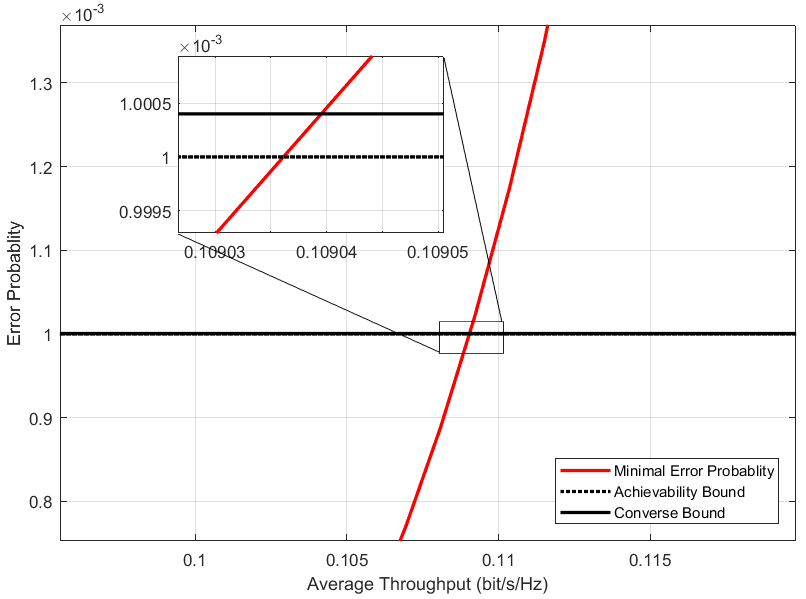}
\caption{Computing process on $\bar{R}^*$ for $i.i.d.$ cases with relay number $N=2$, blocklength $n=300$, transmission power $P=1$, and error probabilities $\varepsilon_d=10^{-3}$. The channel gain obeys an index distribution, and $\bar{g}_{a, b}=1$.}
	\label{fig:scan_rate}
\end{figure}
Accordingly, the solution to the optimization problem (\ref{eq:simplest  constraint}) is obtained using the bisection algorithm. More specifically, the midpoint of the interval $v_3=\frac{v_1+v_2}{2}$ is substituted into the optimization problem (\ref{eq:final constraint}).
The solution $\boldsymbol{x_3}$ of the optimization problem (\ref{eq:final constraint}) with $v_3$ is calculated. Then, the objective function $f(\boldsymbol{x_3})$ is compared with $\varepsilon_d$. If $f(\boldsymbol{x_3})-\varepsilon_d$ is sufficiently small or $\frac{v_2-v_1}{2}$ is smaller than the tolerance, the convergence is satisfactory. However, $v_1$ is replaced with $v_3$ if $f(\boldsymbol{x_3})<\varepsilon_d$. When $f(\boldsymbol{x_3})>\varepsilon_d$, $v_3$ replaces $v_2$ so that there is a solution crossing within the new interval. A detailed process is shown in Algorithm \ref{al:scan rate}.

The complexity of Algorithm \ref{al:scan rate} is $O\left(\log_2\left({Q^{-1}\left(\varepsilon_d\right)}\right)\right)$. In contrast, the complexity of the exhaustive method is $O\left({Q^{-1}\left(\varepsilon_d\right)}\right)$. The proposed method significantly reduces the computational
complexity.

\begin{algorithm}[t]
	\caption{Compute the maximal average rate  $\bar{R}^*$ with specific $\varepsilon_d'$}\label{al:scan rate}
Initialize pre-existing parameter $P, n$, interval $[v_1, v_2]$, iteration number $k$,  etc\\
Compute Shannon capacity $C_{a, b}$, channel dispersion $V_{a, b}$, parameter matrices $(a_{ij})_{N\times (N+1)}, (b_{ij})_{N\times (N+1)}$, coefficient vector $(c_i)_{N\times 1}$\\
\While{$k<k_{\max}$}
{
Compute solution $\boldsymbol{x_0}$ to optimization problem (\ref{eq:final constraint})  through the interior-point method\\
Compute objective functions $f\left(\boldsymbol{x_0}\right)$\\
\If{$\left\lvert f(\boldsymbol{x_0})-\varepsilon_d ' \right\rvert <\varepsilon$ $\mathbf{or}$ $\frac{v_2-v_1}{2}<TOL$}
{
Compute and return  $\bar{R}^*$
}
\eIf{$f(\boldsymbol{x_0})<\varepsilon_d '$}
{$v_1\leftarrow v_3$}
{$v_2\leftarrow v_3$}
}
{Return message `` $\bar{R}^*$ is not found''}
\end{algorithm}

Based on Algorithm \ref{al:scan rate}, we propose a method of computing the approximated maximal CAO-SIR-FBC rate denoted by $\hat{R}$. As shown in Fig. \ref{fig:scan_rate}, the difference in probabilities between the achievability bound
and the converse bound is slight in the general case $(N \leq 10)$. Consequently, a simple approximated maximal average rate $\hat{R}$ is obtained by
\begin{equation}
	\hat{R}=f(\hat{\varepsilon}_d ),
\end{equation}
where the approximated error probability $\hat{\varepsilon}_d$  is given by
\begin{equation}
	\hat{\varepsilon}_d=\frac{1}{2}\left(\varepsilon_d+\frac{M-\sqrt{M^2-2M(M-1)\varepsilon_d}}{M-1}\right).
\end{equation}
In most cases, the approximated rate is reliable.
Furthermore, if the percent error of rate $\hat{R}$ is too high,
we examine every $\varepsilon_d '$ in the range [$\varepsilon_{\min}, \varepsilon_{\max}$] with a step $\Delta \varepsilon$. Then, we compute and cache every $\boldsymbol{x_0}$ of $\varepsilon_d '$. With $\boldsymbol{x_0}$, $\zeta$ is obtained using Eq. (\ref{eq: lower limit of success probability}). A more accurate approximated solution is obtained when the difference is less than the error bound.
The specific process is shown in Algorithm \ref{al:compute approximated rate}.
\begin{algorithm}[t]
\caption{Compute the approximated maximal $\hat{R}$ }\label{al:compute approximated rate}
Initialize the pre-existing parameter of Algorithm \ref{al:scan rate}, error bound
$\varepsilon$,  etc\\
Compute $\varepsilon_{\min}$, $\varepsilon_{\max}$, approximated error probability $\hat{\varepsilon_d }$, achievability bound $R_a$, converse bound
$R_c$, the approximated value $\hat{R}$ and the percent error of $\hat{R}$ are expressed as
\begin{equation}
		\delta =\frac{\max\{ R_c-\hat{R}, \hat{R}-R_a  \}}{\hat{R}}
	\end{equation}
\eIf{$\delta<\alpha$}
{Return $\hat{R}$}
{
Compute $\Delta \varepsilon=(\varepsilon_{\max}-\varepsilon_{\min})/{k}$\\
\For{$\varepsilon_d '=\varepsilon_{\min}:\Delta \varepsilon:\varepsilon_{\max}$}
{
Compute solution $\{R_0, \boldsymbol{x_0}\}$ by Algorithm \ref{al:scan rate}\\
Compute $\zeta$ of $\boldsymbol{x_0}$ \\
\If{$\left\lvert \varepsilon_d-\left(1-\zeta\right)\right\rvert<\varepsilon$ }
{Return $\hat{R}=R_0$}
}
}
\end{algorithm}
Now, we obtain a specific method to compute a reliable $\hat{R}$. The analysis of the protocol performance is
presented in Section \ref{sec:PERFORMANCE ANALYSIS}.

\section{Performance Analysis}\label{sec:PERFORMANCE ANALYSIS}
In this section, the performance of the CAO-SIR-FBC
protocol is analyzed. We are interested in the average throughput
in a fading channel. In this section, the channel spanning between node $a$ and node $b$ is assumed to be an independent distributed Rayleigh fading channel. The probability density function $(p.d.f)$ of the channel gain $g_{a,b}$ is obtained from $f_{g_{a,b}}(x)=\frac{1}{\bar{g}_{a,b}}\ \exp(-\frac{x}{\bar{g}_{a,b}})$. Here, $\bar{g}_{a,b}=\mathbb{E}g_{a,b}=\mathbb{E}|h_{a,b}|^2$ denotes the average channel gain of the link from node $a$ to $b$.

Because it is not trivial to derive a closed-form expression of the average throughput of the CAO-SIR-FBC method, we
focus on the approximation of the average throughout in the high blocklength, low and high SNR.
Consider the high blocklength first. Because the rate $\bar{R}^*$ increases and trends to its capacity when $n$ increases, we have
\begin{equation}
	\lim_{n \to \infty}  \bar{R}^*(\varepsilon, n)=\bar{C}.
\end{equation}
Here, the capacity is presented by \cite{2014CAO}
\begin{align}
	 & \bar{C}=\frac{1}{N+1}\sum\limits_{i=1}^N C_i,
\end{align}
where $ C_i=\min\left\{ C_{S, i}, C_{i, D} \right\}$.
In other words, when the blocklength of coding $n$ tends to infinity, the CAO-SIR-FBC protocol is equivalent to the basic CAO-SIR. As a result, the average throughput is given by
\begin{equation}
	\begin{split}
		\lim_{n \to \infty} \mathbb{E}\{\bar{R}^*(\varepsilon, n)\}
		 =&\mathbb{E}\{\bar{C}\}\\
		=&\frac{2N}{N+1}\int\limits_0^\infty \log(1+x\gamma)\exp(-2x)\, \mathrm{d}x .
	\end{split}
\end{equation}

Next, we focus on the CAO-SIR-FBC protocol's performance in the
SNR domain.
In a low-SNR regime, due to channel dispersion, the
CAO-SIR-FBC rate is 0, as shown in Fig. \ref{fig:rate_p}. In other words, it is impossible to transmit the message with the desired error probability $\varepsilon_d$.
\begin{figure}[t]
	\centering
	\includegraphics[width=0.45\textwidth]{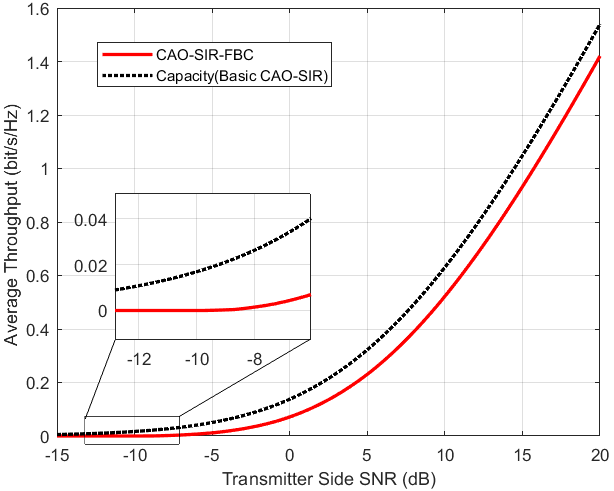}
\caption{Maximal average rate $\bar{R}^*$ for $i.i.d.$ cases with relay number $N=2$, blocklength $n=300$, and error probabilities $\varepsilon_d=10^{-3}$. The channel gain obeys an index distribution, and $\bar{g}_{a, b}=1$.}
	\label{fig:rate_p}
\end{figure}
In a high SNR regime, it is worth noting that $V$ trends to a constant presented by
\begin{equation}
	\lim\limits_{\rho  \to \infty} V\left(\rho\right)=\frac{(\log \mathrm{e})^2}{2}.
\end{equation}
Accordingly, the $h(\boldsymbol{x})$ in the high SNR regime is obtained by
\begin{equation}\label{eq: constraint in high SNR}
	\begin{split}
			      h(\boldsymbol{x})&=\sum\limits_{i=1}^N \sum\limits_{j=i}^{N+1}
			Q\left(x_{i}+\frac{\sqrt{2n}}{\log \mathrm{e}} \log\left(\frac{g_{S, j}}{g_{S, i}}\right)\right)\\
			&\qquad-\varepsilon_d '\\
			&=0
	\end{split}
\end{equation}
It is not difficult to verify that the optimization problem  in the high SNR regime is independent of an arbitrary SNR. Its solution  is denoted by a constant vector $\boldsymbol{x_0}=\big[x_0\left[1\right], x_0[2], \cdots, x_0\left[N\right]\big]$, and we have
\begin{equation}
	f(\boldsymbol{x_0})=\sqrt{\frac{(\log \mathrm{e})^2}{2}}\sum\limits_{i=1}^N x_0[i].
\end{equation}
As a result, the average throughput of the CAO-SIR-FBC protocol in a high SNR regime is given by
\begin{equation}\label{eq: R in high SNR}
	\begin{split}
		\mathbb{E}\{\bar{R}^*\}
		=&\mathbb{E}\{\bar{C}\}-\frac{1}{N+1}
		\left(\sqrt{\frac{(\log \mathrm{e})^2}{2n}}\sum\limits_{i=1}^N x_0[i]-N\frac{\log n}{2n}\right)\\
		=&\frac{2N}{N+1}\int\limits_0^\infty \log(1+x\gamma)\exp(-2x)\, \mathrm{d}x\\
		&-\frac{1}{N+1}\left(\sqrt{\frac{(\log \mathrm{e})^2}{2n}}\sum\limits_{i=1}^N x_0[i]-N\frac{\log n}{2n}\right).
	\end{split}
\end{equation}
As shown in Eq. (\ref{eq: R in high SNR}), the difference between
$R^*$ and $C$ trends to a constant in a high SNR regime.
In addition, the slope of the CAO-SIR-FBC method is equal to its capacity in a high SNR regime. Channel dispersion does not influence the slope, also referred to as multiplexing gain, of the CAO-SIR-FBC protocol.

\section{Numerical Results}\label{ sec:NUMERICAL RESULTS}

Numerical results are presented in this section to verify the above analysis and the advantage of the CAO-SIR-FBC method. Assume that
there are three potential relays. Both independent and
identically distributed ($i.i.d.$) and independent nonidentical distributed ($i.ni.d.$) Rayleigh channels \cite{2006Fundamentals, 2013Outage} are considered. For the $i.i.d.$ case, we suppose the channel gains $\frac{1}{\bar{g}_{a, b}}=1$ for every node pair $(a, b)$. For the $i.ni.d.$ case, we
assume $\frac{1}{\bar{g}_{a, b}}=d_{a, b}^{-2}$, where $d_{a, b}$ denotes the distance between node pair $(a, b)$, because of the pass-loss model. To determine $d_{a, b}$, we assume the following network topology. Nodes $S$ and $D$ are located in coordinates (0, 0) and (0, 1), respectively. The coordinates of the three relays are given by ($\frac{\sqrt{3}}{2}, \frac{1}{2}$),
($-\frac{\sqrt{3}}{2}, \frac{1}{2}$), and ($0, \frac{1}{2}$). To provide more
insight, the CAO-SIR-FBC protocols are compared with the conventional two-timeslot
DF relay protocols (with and without the direct $S-D$ link) with a fixed target rate \cite{laneman2003distributed, laneman2004cooperative, bletsas2006simple} and two-timeslot DF relaying without the direct $S-D$ link \cite{bletsas2006simple}. The average throughput of traditional protocols also considers the influence of finite-blocklength coding. The
fixed target rate is $r=1$. Different from Eq. (\ref{eq:C}), the transmission rate is computed in the complex dimension to easily analyze the slope and multiplexing gains. For other system parameters, blocklength $n=300$, and the error probability at the destination node is $\varepsilon_d=10^{-3}$.

\begin{figure}[t]
	\centering
	\includegraphics[width=0.45\textwidth]{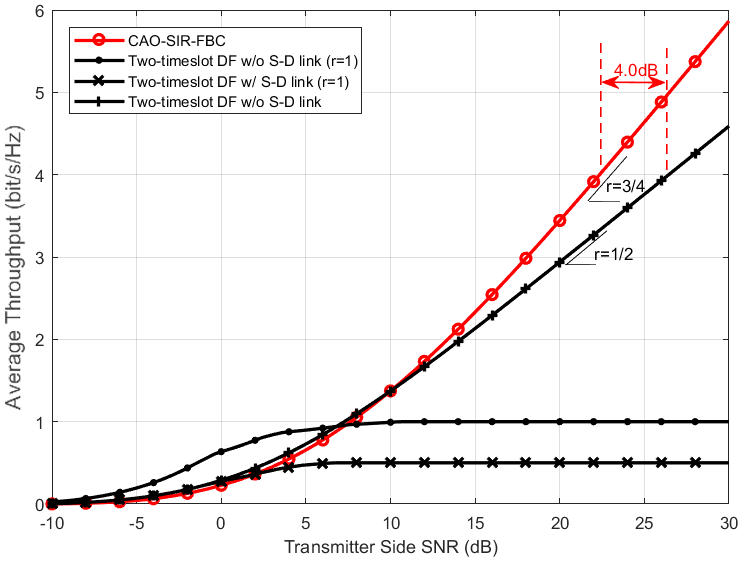}
\caption{
Average throughput in the $i.i.d.$ case. The proposed CAO-SIR-FBC protocols are compared with conventional two-timeslot relaying.
}
	\label{fig:iid}
\end{figure}
\begin{figure}[t]
	\centering
	\includegraphics[width=0.45\textwidth]{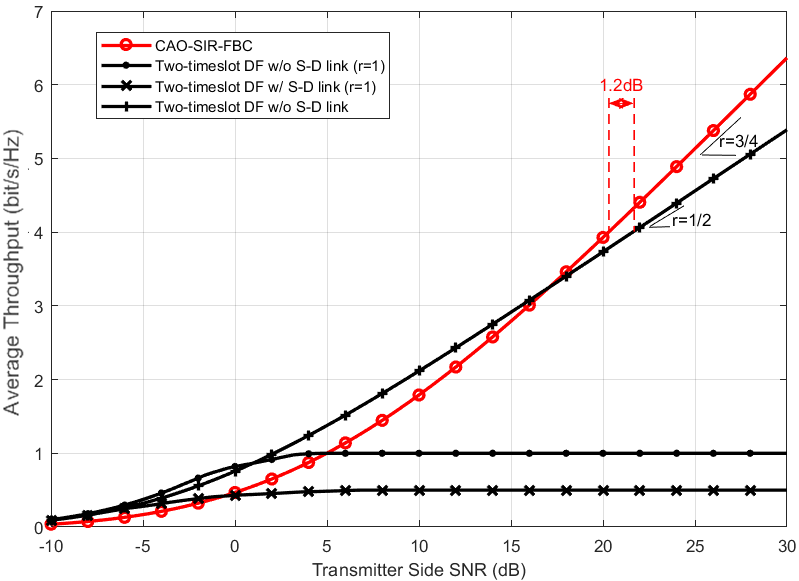}
\caption{
Average throughput in the $i.ni.d.$ case. The proposed CAO-SIR-FBC protocols are compared with conventional two-timeslot relaying.
}
	\label{fig:inid}
\end{figure}
The average throughput versus transmitter-side SNR curves
for $i.i.d.$ and $i.ni.d.$ are presented in Figs. \ref{fig:iid} and \ref{fig:inid}, respectively. In the high SNR regime, we focus on the slope of the throughput, which also means the multiplexing gains. The multiplexing gain of the CAO-SIR-FBC method in both the $i.i.d.$ and $i.ni.d.$ cases is
$\frac{3}{4}$. The multiplexing gain of the CAO-SIR-FBC method is equal to that of the basic CAO-SIR \cite{2014CAO}, which verifies the theory that the channel dispersion does not influence the multiplexing gain. The multiplexing gain of two-timeslot DF relaying without the direct $S-D$ link is $\frac{1}{2}$ in both the $i.i.d.$ and $i.ni.d.$ cases, which is much lower than that of CAO-SIR-FBC because it requires two timeslots to transmit one message. Then, we focus on the required SNR when the average throughput is $4$ bits/s/Hz. In the $i.i.d.$ case, the CAO-SIR-FBC
method has SNR gains of $4.0$ dB over two-timeslot DF relaying without the direct $S-D$ link. For the $i.ni.d.$ scenario, the SNR gains of CAO-SIR-FBC over two-timeslot DF relaying without the direct $S-D$ link are $1.2$ dB. This shows that in the high SNR regime, the employed relays of CAO-SIR-FBC are beneficial to
obtain a high average throughput.

Next, we focus on comparing CAO-SIR-FBC and conventional relaying protocols with fixed target rates. It should be noted that it might be unfair to
set the fixed target rate of conventional relaying protocols,
because the average throughput for fixed-rate relaying is constrained in the high SNR regime. Specifically, we focus on a transmitter-side SNR of 20 dB. With the fixed target rate $r=1$, the average throughput of the two-timeslot
DF relaying with and without direct $S-D$ links is $1$ and $\frac{1}{2}$ bit/s/Hz, respectively, regardless of whether the  case is $i.i.d.$ or $i.ni.d.$.
Although the upper bound in the high SNR regime is increased with the fixed rate, it is worth noting that a higher target rate also causes a lower average throughput in the low-SNR regime. In addition, even without a fixed target rate, two-timeslot DF relaying without the $S-D$ link is still only $2.9 $ bit/s/Hz in the $i.i.d.$ case and $3.7$ bit/s/Hz in the $i.ni.d.$ case. In contrast, because the CAO-SIR-FBC is capable of adapting its transmission rate to the channel gains, it always achieves the optimal average throughput in the high SNR regime. In the $i.i.d.$ case, CAO-SIR-FBC achieves throughput gains of $117\%$, $344\%$, and $688\%$ over two-timeslot DF relaying and fixed-rate two-timeslot DF relaying with and without direct $S-D$ links, respectively. In the $i.ni.d.$ case, the throughput gains of CAO-SIR-FBC over two-timeslot DF relaying as well as fixed-rate two-timeslot DF relaying with and without direct $S-D$ links are $105\%$, $392\%$, and $784\%$, respectively.
Hence, we can conclude that the joint channel-aware rate-adaptation mechanism greatly improves the efficiency of successive relaying.

\section{Conclusions}\label{sec:Conclusions }
This paper presents an optimized relaying protocol referred to as CAO-SIR-FBC based on the CAO-SIR scheme.
Faced with a short-packet transmission in which the Shannon formula is not achievable, we analyze the optimization problems of a successive relaying protocol with FBC in detail.
We then solve the optimization problem and clarify the adjustments of the successive relaying protocol with the FBC.
To analyze the performance of the CAO-SIR-FBC protocol, we carried out a series of simulation experiments and analyzes. Experimental results show that the CAO-SIR-FBC protocol has high reliability and excellent transmission performance in short-packet transmission, and its performance is better than those of conventional protocols with FBC.
In the future, we hope to continue studying the relay ranking, relay selection, power allocation, and other mechanisms of the CAO-SIR-FBC protocol for improvement.


%

\appendices
\section{PROOF OF THEOREM \ref{THEOREM 1}\label{sec:appendices 4}}
The proof is based on the following observation. In CAO-SIR-FBC, the relay's protocol is successive decoding. More specifically, $W_1$ is decoded by each relay in the first timeslot. In the second timeslot, $W_2$ is decoded successfully only if $W_1$ is decoded successfully. Likewise, message $W_k$ is decoded successfully only if
messages $W_1, W_2, \cdots W_{k-1}$ are decoded successfully. As a result,
error probabilities are increased in each timeslot, also referred to as error propagation.

Let $\hat{W}_k$ denote the message decoded by relay $k$ in the $k$th timeslot. The message $W_k$ is decoded successfully by relay $k$ only if $\hat{W}_k=W_k$.
Consequently, for $2\leq  k\leq  N$, the probability of decoding success, denoted by $\mathrm{Pr}\left\{\hat{W}_k=W_k\right\}$, is presented by

\begin{equation}
	\begin{split}
		\mathrm{Pr}\left\{\hat{W}_k=W_k\right\}		=&\mathrm{Pr}\left\{\hat{W}_{1}=W_{1}\right\}\\
		&\times\prod\limits_{i=1}^{k-1}\mathrm{Pr}\left\{\hat{W}_{i+1}=W_{i+1}|\hat{W_{i}}=W_{i}\right\}.
	\end{split}
\end{equation}
Additionally, based on the definition of $\varepsilon$, when $k=1$, we have
\begin{equation}
	\mathrm{Pr}\left\{\hat{W}_{1}=W_{1}\right\} \geq 1-\varepsilon_1[1].
\end{equation}

When $2\leq  k\leq  N$, because the event $\left\{\hat{W}_k=W_k|\hat{W}_{k-1}=W_{k-1}\right\}$ means that relay $k$ decodes the message $W_1, W_2, \cdots W_{k}$ successfully,
it is obtained that
\begin{equation}
	\mathrm{Pr}\left\{\hat{W}_k=W_k|\hat{W}_{k-1}=W_{k-1}\right\}\geq \prod_{i=1}^{k}(1-\varepsilon_k[i]).
\end{equation}
Based on the multiplication theorem, the probability of relay $N$ decoding message
$W_N$ successfully satisfies
\begin{equation}\label{eq:P(A) min}
	\mathrm{Pr}\left\{\hat{W}_{N}=W_{N}\right\}\geq \prod_{k=1}^{N}\prod_{i=1}^{k}
	(1-\varepsilon_k[i]).
\end{equation}
Rewritten according to relay number $m$, Eq. (\ref{eq:P(A) min}) is expressed as
\begin{equation}\label{eq:P(A) min2}
	\mathrm{Pr}\left\{\hat{W}_{N}=W_{N}\right\}\geq \prod_{k=1}^{N}\prod_{m=k}^{N}
	(1-\varepsilon_m[k]).
\end{equation}
For node $D$, the protocol is reverse successive decoding.
Specifically, $W_k$ is decoded successfully at node $D$ only if
$W_{k+1}, W_{k+2}, \cdots, W_N$ are also decoded successfully.
Let $\hat{W}_k^D$ denote the estimate of $W_k$ at the destination node $D$. The message $W_k$ is decoded successfully by $D$ only if $\hat{W}_k^D=W_k$.
For $1\leq  k\leq  N-1$, the probability of decoding success at node $D$, denoted by $\mathrm{Pr}\left\{\hat{W}_k^D=W_k\right\}$, is expressed as
\begin{equation}
	\begin{split}
		\mathrm{Pr}\left\{\hat{W}_k^D=W_k\right\}		=&\mathrm{Pr}\left\{\hat{W}_{N}^D=W_{N}\right\}\\
		&\times\prod\limits_{i=k}^{N-1}\mathrm{Pr}\left\{\hat{W}_{i}^D=W_{i}|\hat{W}_{i+1}^D=W_{i+1}\right\}.
	\end{split}
\end{equation}
When $k=N$,
message $W_N$ is decoded
correctly at $D$ only if relay $N$ decodes the message successfully, so $\mathrm{Pr}\left\{\hat{W}_{N}^D=W_{N}\right\}$ is given by
\begin{equation}
	\begin{split}
		\mathrm{Pr}\left\{\hat{W}_{N}^D=W_{N}\right\}&=\mathrm{Pr}\left\{\hat{W}_{N}^D=W_{N}|\hat{W}_{N}=W_{N}\right\}\\
		&\qquad\times\mathrm{Pr}\left\{\hat{W}_{N}=W_{N}\right\}.
	\end{split}
\end{equation}
Hence, $\mathrm{Pr}\left\{\hat{W}_{N}^D=W_{N}\right\}$
satisfies
\begin{equation}
	\begin{split}
		\mathrm{Pr}\left\{\hat{W}_{N}^D=W_{N}\right\}
		\geq \mathrm{Pr}\left\{\hat{W}_{N}=W_{N}\right\}(1-\varepsilon_D[N]).
	\end{split}
\end{equation}
Additionally, when $1\leq  k\leq  N-1$, we have
\begin{equation}
	\mathrm{Pr}\left\{\hat{W}_{k}^D=W_{k}|\hat{W}_{k+1}^D=W_{k+1}\right\}\geq 1-\varepsilon_D[k].
\end{equation}
As a result, the minimal probability of $\left\{\hat{W}^D_{1}=W_{1}\right\}$ satisfies
\begin{equation}\label{eq:all error probability}
	\mathrm{Pr}\left\{\hat{W}^D_{1}=W_{1}\right\} \geq \prod_{k=1}^{N}(1-\varepsilon_D[k])
	\prod_{k=1}^{N}\prod_{m=k}^{N}
	(1-\varepsilon_m[k]).
\end{equation}
We use $\zeta$ to denote the right-hand side of Eq. (\ref{eq:all error probability}), which is also the lower limit of $\mathrm{Pr}\left\{\hat{W}^D_{1}=W_{1}\right\}$, so we have
\begin{equation} \label{eq: lower limit of success probability1}
    \zeta=\prod_{k=1}^{N}(1-\varepsilon_D[k])
	\prod_{k=1}^{N}\prod_{m=k}^{N}
	(1-\varepsilon_m[k]).
\end{equation}
It is worth noting that $\mathrm{Pr}\left\{\hat{W}^D_{1}=W_{1}\right\}$ is also the probability of $D$ decoding all messages without errors, which is referred to as the reliability of the communication system. Therefore, the theorem \ref{THEOREM 1} is proved.

\section{PROOF OF LEMMA \ref{LEMMA 1}}\label{sec:appendices 1}
The proof is based on the following observation.
The expansion of Eq. (\ref{eq: lower limit of success probability})
with the first and second remainder terms expressed as
\begin{equation}
		\begin{split}
			\zeta=&1-\sum_{i=1}^{M}\varepsilon_i+o\left(\sum_{i=1}^{M}\varepsilon_i\right)\\
			=&1-\sum_{i=1}^M\varepsilon_i+\sum_{i=1}^{M-1}\sum_{j=i+1}^M \varepsilon_i\varepsilon_j
			-o\left(\sum_{i=1}^{M-1}\sum_{j=i+1}^M \varepsilon_i\varepsilon_j\right),
		\end{split}
	\end{equation}
where the remainder term $o\left(\sum_{i=1}^{M}\varepsilon_i\right)$
means that it is the
higher order infinitesimal of $\sum_{i=1}^{M}\varepsilon_i$.
Similarly, the remainder term $o\left(\sum_{i=1}^{M-1}\sum_{j=i+1}^M \varepsilon_i\varepsilon_j\right)$ is
the
higher order infinitesimal of $\sum_{i=1}^{M-1}\sum_{j=i+1}^M \varepsilon_i\varepsilon_j$.
Because each remainder term is positive definite,
we have the following inequality
\begin{align}
		\zeta & >1-\sum_{i=1}^{M}\varepsilon_i, \label{ineq: B bigger}                                  \\
		\zeta & <1-\sum_{i=1}^M\varepsilon_i+\sum_{i=1}^{M-1}\sum_{j=i+1}^M \varepsilon_i\varepsilon_j.
		\label{ineq: B small}
	\end{align}
For the second-order term on the right-hand side of Eq. (\ref{ineq: B small}), because of the inequality of arithmetic and geometric means (AM-GM inequality),
we have
\begin{equation}\label{inqe:2nd term}
		\sum_{i=1}^{M-1}\sum_{j=i+1}^M \varepsilon_i\varepsilon_j
		\leq \frac{M-1}{2}\sum_{i=1}^M \varepsilon_i^2
		<\frac{M-1}{2}\varepsilon_d\sum_{i=1}^{M}\varepsilon_i.
	\end{equation}
Here, the first term is equal to the second term if and only if every $\varepsilon_i$ is equal to each other for $i=1, 2, \cdots, M$.
The second inequality relation results from
$\varepsilon_i<\varepsilon_d$ for $i=1, 2, \cdots, M$.

Additionally, we substitute Eq. (\ref{inqe:2nd term}) into
the expansion of $(\sum_{i=1}^M \varepsilon_i)^2$. The intersection
of inequality is expressed as
\begin{equation}
		\begin{split}
			\left(\sum_{i=1}^M \varepsilon_i\right)^2&=
			\sum_{i=1}^M \varepsilon_i^2
			+2\sum_{i=1}^{M-1}\sum_{j=i+1}^M \varepsilon_i\varepsilon_j\\
			&\geq \frac{2M}{M-1}
			\sum_{i=1}^{M-1}\sum_{j=i+1}^M \varepsilon_i\varepsilon_j.
		\end{split}
	\end{equation}
Thus, we have
\begin{equation}
		\sum_{i=1}^{M-1}\sum_{j=i+1}^M \varepsilon_i\varepsilon_j
		\leq \frac{M-1}{2M}\left(\sum_{i=1}^M \varepsilon_i\right)^2.
		\label{inqe:2nd term 2}
	\end{equation}
By substituting the inequality
(\ref{ineq: B small}) into the right-hand side of inequality (\ref{inqe:2nd term 2})
and the third term of inequalities (\ref{inqe:2nd term}), it follows that
\begin{equation}
		\begin{cases}
			\zeta<
			1-\frac{2-(M-1)\varepsilon_d}{2}\sum\limits_{i=1}^M\varepsilon_i, \\
			\zeta<1-\sum\limits_{i=1}^M\varepsilon_i
			+\frac{M-1}{2M}\left(\sum\limits_{i=1}^M\varepsilon_i\right)^2.
		\end{cases}
	\end{equation}
By substituting
the right-hand side of inequality (\ref{ineq: B bigger}) for $\zeta$ in inequality (\ref{ineq: constraint}),
we have
\begin{equation}\label{ineq:lower bound substitution inequality}
		1-\sum_{i=1}^M\varepsilon_i\geq 1-\varepsilon_d'.
	\end{equation}
It is important to note that Eq. (\ref{ineq:lower bound substitution inequality}) is the equivalent condition of Eq. (\ref{ineq: constraint}). According to Eq. (\ref{ineq: B bigger}), the right-hand side of Eq. (\ref{ineq: constraint}) should be greater than the right-hand side of Eq. (\ref{ineq:lower bound substitution inequality}), so $\varepsilon_d '$
satisfies
\begin{equation}
		\varepsilon_d'\geq \varepsilon_d.
	\end{equation}
Likewise, we obtain two converse bounds of
$\bar{R}^*$ by substituting $\zeta$
into the right-hand side of inequalities (\ref{inqe:2nd term})
and (\ref{inqe:2nd term 2}).
The inequalities are
\begin{equation}
		\begin{cases}
			\varepsilon_d\leq
			\frac{2-(M-1)\varepsilon_d}{2}\sum\limits_{i=1}^M\varepsilon_i \\
			\varepsilon_d\leq
			\sum\limits_{i=1}^M\varepsilon_i
			+\frac{M-1}{2M}\left(\sum\limits_{i=1}^M\varepsilon_i\right)^2.
		\end{cases}
	\end{equation}
Choosing the intersection
of inequalities, we obtain
a reliable upper bound of $\varepsilon_d '$,
\begin{equation}
		\varepsilon_d '\leq
		\frac{M-\sqrt{M^2-2M(M-1)\varepsilon_d}}{M-1}.
	\end{equation}
Thus, Lemma \ref{LEMMA 1} is proved.

\section{PROOF OF LEMMA \ref{LEMMA 2}}\label{sec:appendices 2}
We prove this lemma by contradiction. First, we assume that there is a set of solutions $\left\{x_1[1], x_2[2], \cdots, x_N[N]\right\}$ to the optimization problem, and the solution satisfies
\begin{equation}
		\sum\limits_{k=1}^N\left(\sum\limits_{m=k}^{N}Q(x_m[k])+Q(x_D[k])\right )<\varepsilon_d '  .
	\end{equation}
Then, because $Q(x)$ is a continuous function, $ \Delta x>0$ satisfies
\begin{equation}
		\begin{split}
			&\sum\limits_{k=1}^{N-1}\left(\sum\limits_{m=k}^{N}Q(x_m[k])+Q(x_D[k])\right )\\
			&\qquad+Q\left(x_N[N]-\Delta x\right) +Q\Big(a\left(x_N[N]-\Delta x\right)+b\Big)<\varepsilon_d ' .
		\end{split}
	\end{equation}
Here, the parameters $a=\sqrt{V_{S, N}/V_{N, D}}$ and $b=(C_{N, D}-C_{S, N})\sqrt{n/V_{N, D}}$.
Therefore, we have another set of solutions $\{x_1[1], x_2[2], \cdots, x_N[N]-\Delta x\}$ that satisfies the constraint. Its equivalent is less than the value of the assumed solution, which
contradicts our supposition.

Additionally, we have
\begin{equation}
		\sum\limits_{k=1}^N\left(\sum\limits_{m=k}^{N}Q(0)+Q(0)\right)>\varepsilon_d ',
	\end{equation}
Consequently, $\varepsilon_d '$ is an achievable point due to the intermediate value theorem.
The set of solutions must satisfy
\begin{equation}
		\sum\limits_{k=1}^N\left(\sum\limits_{m=k}^{N}Q(x_m[k])+Q(x_D[k])\right )=\varepsilon_d ',
	\end{equation}
which completes the proof.

\section{PROOF OF LEMMA \ref{LEMMA 3}}\label{sec:appendix 3}
The proof is based on the following observation.
To easily describe it, we denote the equality objective functions and constraint function of the original
optimization problem (\ref{eq:simplest  constraint}) by $f_0(\boldsymbol{x}) $ and $h_0(\boldsymbol{x})$, respectively. Similarly, the equality objective functions and constraint function of the optimization problem (\ref{eq:final constraint}) are $f_1(\boldsymbol{x}), h_1(\boldsymbol{x})$. The relationship between $f_0(\boldsymbol{x}), h_0(\boldsymbol{x})$ and
$f_1(\boldsymbol{x}), h_1(\boldsymbol{x})$ is expressed as
\begin{align}
		f_1(\boldsymbol{x})=h_0(\boldsymbol{x})+\varepsilon_d ', \\
		h_1(\boldsymbol{x})=f_0(\boldsymbol{x})-v_0 .
	\end{align}
Based on problem (\ref{eq:simplest  constraint}), we obtain the Karush-Kuhn-Tucker (KKT) conditions:
\begin{equation}
		\begin{cases}
			\nabla f_0(\boldsymbol{x_0})+\lambda \nabla h_0(\boldsymbol{x_0})=0, \\
			h_0(\boldsymbol{x_0})=0.                                   .         \\
		\end{cases}
	\end{equation}
Because $\nabla f_0(\boldsymbol{x})\neq 0, $ for all $\boldsymbol{x}$, it follows that $\lambda \neq 0$.
As a result, by substituting $\boldsymbol{x_0}$ and $1/\lambda$ into the KKT condition of the changed optimization problem, we have
\begin{align}
		\begin{split}
			\nabla f_1(\boldsymbol{x_0})+\frac{1}{\lambda} \nabla h_1(\boldsymbol{x_0})
			=&\nabla h_0(\boldsymbol{x_0})+\frac{1}{\lambda}(\nabla f_0(\boldsymbol{x_0}))\\
			=&\frac{1}{\lambda}(\nabla f_0(\boldsymbol{x_0})+\lambda \nabla h_0(\boldsymbol{x_0}))\\
			=&0,
		\end{split}\\
		\begin{split}
		h_1(\boldsymbol{x_0})=&f_0(\boldsymbol{x_0})-v_0\\
		=&0.
		\end{split}
	\end{align}
Therefore, $(\boldsymbol{x_0}, \frac{1}{\lambda})$ is a solution to the
optimization problem (\ref{eq:simplest  constraint}). Furthermore, because the
KKT condition of the optimization problem (\ref{eq:simplest  constraint})
consists of $N+1$ independent equations, there is only one solution satisfying the KKT condition. Hence, the solution to the optimization problem (\ref{eq:simplest  constraint}) is also $\boldsymbol{x_0}$, which completes the proof.

\ifCLASSOPTIONcaptionsoff
  \newpage
\fi



\bibliographystyle{IEEEtran}
\bibliography{CAO-SIR-FBC.bib}

\begin{thebibliography}{10}
\providecommand{\url}[1]{#1}
\csname url@samestyle\endcsname
\providecommand{\newblock}{\relax}
\providecommand{\bibinfo}[2]{#2}
\providecommand{\BIBentrySTDinterwordspacing}{\spaceskip=0pt\relax}
\providecommand{\BIBentryALTinterwordstretchfactor}{4}
\providecommand{\BIBentryALTinterwordspacing}{\spaceskip=\fontdimen2\font plus
\BIBentryALTinterwordstretchfactor\fontdimen3\font minus
  \fontdimen4\font\relax}
\providecommand{\BIBforeignlanguage}[2]{{%
\expandafter\ifx\csname l@#1\endcsname\relax
\typeout{** WARNING: IEEEtran.bst: No hyphenation pattern has been}%
\typeout{** loaded for the language `#1'. Using the pattern for}%
\typeout{** the default language instead.}%
\else
\language=\csname l@#1\endcsname
\fi
#2}}
\providecommand{\BIBdecl}{\relax}
\BIBdecl

\bibitem{yang20196g}
P.~Yang, Y.~Xiao, M.~Xiao, and S.~Li, ``{6G wireless communications: Vision and
  potential techniques},'' \emph{IEEE Netw.}, vol.~33, no.~4, pp. 70--75, July.
  2019.

\bibitem{20196gwirelessnetwork}
Z.~Zhang, Y.~Xiao, Z.~Ma, M.~Xiao, Z.~Ding, X.~Lei, G.~K. Karagiannidis, and
  P.~Fan, ``6g wireless networks: Vision, requirements, architecture, and key
  technologies,'' \emph{IEEE Veh. Technol. Mag.}, vol.~14, no.~3, pp. 28--41,
  Sep. 2019.

\bibitem{2019effective}
M.~Amjad, L.~Musavian, and M.~H. Rehmani, ``Effective capacity in wireless
  networks: A comprehensive survey,'' \emph{IEEE Commun. Surveys Tuts.},
  vol.~21, no.~4, pp. 3007--3038, 4th Quart. 2019.

\bibitem{20173GPP170379}
\emph{{Study on New Radio (NR) Access Technology}}, document {RP}‐170379,
  3GPP, Mar. 2017.

\bibitem{20173GPP38.913}
\emph{{Study on Scenarios and Requirements for Next Generation Access
  Technologies; (Release 14)}}, document TR 38.913 v14.2.0, 3GPP, Mar. 2017.

\bibitem{20183GPP38.300}
\emph{{Technical Specification Group Radio Access Network; NR and NG‐RAN
  Overall Description; (Release 15)}}, document TS 38.300 v.15.0.0, 3GPP, Jan.
  2017.

\bibitem{2016toward}
G.~Durisi, T.~Koch, and P.~Popovski, ``Toward massive, ultrareliable, and
  low-latency wireless communication with short packets,'' \emph{Proc. IEEE},
  vol. 104, no.~9, pp. 1711--1726, Aug. 2016.

\bibitem{1948mathematical}
C.~E. Shannon, ``A mathematical theory of communication,'' \emph{Bell Syst.
  Tech. J.}, vol.~27, no.~3, pp. 379--423, 1948.

\bibitem{2014quasi}
W.~Yang, G.~Durisi, T.~Koch, and Y.~Polyanskiy, ``Quasi-static multiple-antenna
  fading channels at finite blocklength,'' \emph{IEEE Trans. Inf. Theory},
  vol.~60, no.~7, pp. 4232--4265, Jun. 2014.

\bibitem{2015short}
G.~Durisi, T.~Koch, J.~{\"O}stman, Y.~Polyanskiy, and W.~Yang, ``Short-packet
  communications over multiple-antenna rayleigh-fading channels,'' \emph{IEEE
  Trans. Commun.}, vol.~64, no.~2, pp. 618--629, Feb. 2015.

\bibitem{2013block}
W.~Yang, G.~Durisi, T.~Koch, and Y.~Polyanskiy, ``{Block-Fading Channels at
  Finite Blocklength},'' in \emph{Proc. IEEE Int. Symp. Wirel. Comm. Syst.
  (ISWCS)}, Ilmenau, Germany, Aug. 2013, pp. 1--4.

\bibitem{2018low}
J.~Östman, R.~Devassy, G.~C. Ferrante, and G.~Durisi, ``{Low-Latency
  Short-Packet Transmissions: Fixed Length or HARQ?}'' in \emph{Proc. IEEE
  Global Telecommun. Conf. (GLOBECOM)}, Abu Dhabi, United Arab Emirates, Dec.
  2018, pp. 1--6.

\bibitem{2020joint}
X.~Zhao, W.~Chen, and H.~V. Poor, ``{Joint framing and finite-blocklength
  coding for URLLC in multi-user downlinks},'' in \emph{Proc. IEEE Int. Conf.
  Commun. (ICC)}.\hskip 1em plus 0.5em minus 0.4em\relax IEEE, Jun 2020, pp.
  1--6.

\bibitem{2018short}
M.~Shirvanimoghaddam, M.~S. Mohammadi, R.~Abbas, A.~Minja, C.~Yue, B.~Matuz,
  G.~Han, Z.~Lin, W.~Liu, Y.~Li \emph{et~al.}, ``Short block-length codes for
  ultra-reliable low latency communications,'' \emph{IEEE Commun. Mag.},
  vol.~57, no.~2, pp. 130--137, Feb. 2018.

\bibitem{20185gdesign}
P.~Marsch, {\"O}.~Bulakci, O.~Queseth, and M.~Boldi, \emph{5G System Design:
  Architectural and Functional Considerations and Long Term Research}.\hskip
  1em plus 0.5em minus 0.4em\relax Hoboken, NJ, USA: Wiley, 2018, ch.~11.

\bibitem{1999capacity}
E.~Telatar, ``Capacity of multi-antenna gaussian channels,'' \emph{Eur. Trans.
  Telecommun.}, vol.~10, no.~6, pp. 585--595, 1999.

\bibitem{2006MIMOOFDM}
P.~Morris and C.~Athaudage, ``Fairness based resource allocation for multi-user
  mimo-ofdm systems,'' in \emph{IEEE Proc. Veh. Technol. Conf. 2006}, vol.~1,
  Melbourne, Australia, 2006, pp. 314--318.

\bibitem{2008unified}
W.~Chen, L.~Dai, K.~B. Letaief, and Z.~Cao, ``A unified cross-layer framework
  for resource allocation in cooperative networks,'' \emph{IEEE Trans. Wireless
  Commun.}, vol.~7, no.~8, pp. 3000--3012, Aug. 2008.

\bibitem{sendonaris2003user}
A.~Sendonaris, E.~Erkip, and B.~Aazhang, ``{User cooperation diversity. Part I.
  System description},'' \emph{IEEE Trans. Commun.}, vol.~51, no.~11, pp.
  1927--1938, Nov. 2003.

\bibitem{sendonaris2003user2}
------, ``{User cooperation diversity. Part II. Implementation aspects and
  performance analysis},'' \emph{IEEE Trans. Commun.}, vol.~51, no.~11, pp.
  1939--1948, Nov. 2003.

\bibitem{laneman2004cooperative}
J.~N. Laneman, D.~N. Tse, and G.~W. Wornell, ``{Cooperative diversity in
  wireless networks: Efficient protocols and outage behavior},'' \emph{IEEE
  Trans. Inf. Theory}, vol.~50, no.~12, pp. 3062--3080, Dec. 2004.

\bibitem{nabar2004fading}
R.~U. Nabar, H.~Bolcskei, and F.~W. Kneubuhler, ``{Fading relay channels:
  Performance limits and space-time signal design},'' \emph{IEEE J. Sel. Areas
  Commun.}, vol.~22, no.~6, pp. 1099--1109, Jun. 2004.

\bibitem{li2016multi}
C.~Li, B.~Xia, S.~Shao, Z.~Chen, and Y.~Tang, ``{Multi-user scheduling of the
  full-duplex enabled two-way relay systems},'' \emph{IEEE Trans. Wireless
  Commun.}, vol.~16, no.~2, pp. 1094--1106, Feb. 2016.

\bibitem{yang2007towards}
S.~Yang and J.-C. Belfiore, ``{Towards the optimal Amplify-and-Forward
  cooperative diversity scheme},'' \emph{IEEE Trans. Inf. Theory}, vol.~53,
  no.~9, pp. 3114--3126, Sep. 2007.

\bibitem{nomikos2012successive}
N.~Nomikos and D.~Vouyioukas, ``{A successive opportunistic relaying protocol
  with inter-relay interference mitigation},'' in \emph{Proc. 8th Int.Wireless
  Commun. Mobile Comput. Conf. (IWCMC)}, Aug. 27-31, 2012, pp. 228--233.

\bibitem{hu2012efficient}
Y.~Hu, K.~H. Li, and K.~C. Teh, ``{An efficient successive relaying protocol
  for multiple-relay cooperative networks},'' \emph{IEEE Trans. Wireless
  Commun.}, vol.~11, no.~5, pp. 1892--1899, May. 2012.

\bibitem{wei2020successive}
J.~Wei, J.~Wei, S.~Hu, and W.~Chen, ``{Successive Decode-and-Forward relaying
  with privacy-aware interference suppression},'' \emph{IEEE Access}, vol.~8,
  pp. 95\,793--95\,806, Apr. 2020.

\bibitem{2014CAO}
W.~Chen, ``{CAO-SIR}: Channel aware ordered successive relaying,'' \emph{IEEE
  Trans. Wireless Commun.}, vol.~13, no.~12, pp. 6513--6527, Dec. 2014.

\bibitem{2019finite}
X.~Zhang, Q.~Zhu, and H.~V. Poor, ``{Finite-Blocklength Performance of
  Relay-Networks over Nakagami-m Channels},'' in \emph{Proc. IEEE Global
  Telecommun. Conf. (GLOBECOM)}, Waikoloa, HI, USA, Dec. 2019, pp. 1--6.

\bibitem{2016finite}
F.~Du, Y.~Hu, L.~Qiu, and A.~Schmeink, ``{Finite blocklength performance of
  multi-hop relaying networks},'' in \emph{Proc. IEEE Int. Symp. Wirel. Comm.
  Syst. (ISWCS)}, Poznan, Poland, Sep. 2016, pp. 466--470.

\bibitem{polyanskiy2010channel}
Y.~Polyanskiy, H.~V. Poor, and S.~Verd{\'u}, ``Channel coding rate in the
  finite blocklength regime,'' \emph{IEEE Trans. Inf. Theory}, vol.~56, no.~5,
  pp. 2307--2359, May. 2010.

\bibitem{1980Relative}
D.~M. Bates and D.~G. Watts, ``Relative curvature measures of nonlinearity,''
  \emph{J. R. Stat. Soc. Ser. B}, vol.~42, no.~1, pp. 1--25, 1980.

\bibitem{2004Convex}
S.~Boyd, S.~P. Boyd, and L.~Vandenberghe, \emph{Convex Optimization}.\hskip 1em
  plus 0.5em minus 0.4em\relax Cambridge Univ. Press, 2004, ch.~11.

\bibitem{2008CVX}
M.~Grant and S.~Boyd, ``{CVX}: Matlab software for disciplined convex
  programming, version 2.1,'' \emph{\url{http://cvxr.com/cvx}}, Mar. 2014.

\bibitem{2006Fundamentals}
D.~Tse and P.~Viswanath, \emph{Fundamentals of wireless communication}.\hskip
  1em plus 0.5em minus 0.4em\relax Cambridge, U.K.: Cambridge university press,
  2005, pp. 387--388.

\bibitem{2013Outage}
B.~Bai, W.~Chen, K.~B. Letaief, and Z.~Cao, ``Outage exponent: A unified
  performance metric for parallel fading channels,'' \emph{IEEE Trans. Inf.
  Theory}, vol.~59, no.~3, pp. 1657--1677, Mar. 2013.

\bibitem{laneman2003distributed}
J.~N. Laneman and G.~W. Wornell, ``{Distributed space-time-coded protocols for
  exploiting cooperative diversity in wireless networks},'' \emph{IEEE Trans.
  Inf. Theory}, vol.~49, no.~10, pp. 2415--2425, Oct. 2003.

\bibitem{bletsas2006simple}
A.~Bletsas, A.~Khisti, D.~P. Reed, and A.~Lippman, ``{A simple cooperative
  diversity method based on network path selection},'' \emph{IEEE J. Sel. Areas
  Commun.}, vol.~24, no.~3, pp. 659--672, Mar. 2006.

\end{thebibliography}
%

%

\end{document}